\definecolor{darkred}{RGB}{139,0,0}
\definecolor{darkblue}{RGB}{0,0,139}
\definecolor{darkgreen}{RGB}{0,100,0}
\definecolor{darkmagenta}{RGB}{139,0,120}
\providecommand\given{}
\newcommand\SetSymbol[1][]{%
  \nonscript\:#1\vert
  \allowbreak
  \nonscript\:
  \mathopen{}}
\DeclarePairedDelimiterX\Set[1]\{\}{%
  \renewcommand\given{\SetSymbol[]}
  #1
}
\newcommand{\nospaceperiod}{\makebox[0pt][l]{\,.}}
\newcommand{\bR}{\mathbb{R}}
\DeclarePairedDelimiter\ceil{\lceil}{\rceil}
\DeclarePairedDelimiter\abs{\lvert}{\rvert}
\DeclareMathOperator{\Del}{\mathcal D}
\DeclareMathOperator{\SelDel}{\Del}
\DeclareMathOperator{\Cech}{\mathcal{C}}
\DeclareMathOperator{\DelCech}{\mathcal{D}\mathcal{C}}
\newcommand{\collapse}{\searrow}
\newcommand{\offset}{\mathcal{O}}
\newcommand{\cech}{\mathcal{C}}
\newcommand{\Inc}{\mathcal{I}}
\DeclareMathOperator{\rips}{\mathcal{R}}
\newcommand{\radius}[2]{\rho_{#1}}
\newcommand{\selradius}[2]{\rho{(#1, #2)}}
\newcommand{\meb}[1]{r _{#1}}
\newcommand{\deff}{\emph}
\newcommand{\R}{\mathbb{R}}
\newcommand\sbullet[1][.5]{\mathbin{\hbox{\scalebox{#1}{$\bullet$}}}}
\newcommand{\uni}{\sbullet[.5]}
\newcommand{\filt}{\uni}
\newtheorem{theorem}{Theorem}[section]
\newtheorem{lemma}[theorem]{Lemma}
\newtheorem{proposition}[theorem]{Proposition}
\newtheorem{corollary}[theorem]{Corollary}
\newtheorem*{corollary*}{Corollary}
\theoremstyle{definition}
\newtheorem{definition}[theorem]{Definition}
\theoremstyle{remark}
\newtheorem{remark}[theorem]{Remark}
\renewcommand\subparagraph{\@startsection{subparagraph}{5}%
  \z@{.5\linespacing\@plus.7\linespacing}{-.5em}%
  {\normalfont\bfseries}}
\title[Delaunay Bifiltrations of Functions on Point Clouds]{Delaunay Bifiltrations of Functions on Point Clouds}
\author[Á.J. Alonso]{Ángel Javier Alonso}
\author[M. Kerber]{Michael Kerber}
\address{Institute of Geometry, Graz University of Technology, Austria}
\email{alonsohernandez@tugraz.at}
\email{kerber@tugraz.at}
\author[T. Lam]{Tung Lam}
\author[M. Lesnick]{Michael Lesnick}
\address{SUNY Albany, Albany, NY, USA}
\email{tlam@albany.edu}
\email{mlesnick@albany.edu}
\date{}
\begin{document}

\begin{abstract}
The Delaunay filtration $\Del_{\filt}(X)$ of a point cloud $X\subset \R^d$ is a
central tool of computational topology.  Its use is justified by the topological
equivalence of $\Del_{\filt}(X)$ and the offset (i.e., union-of-balls)
filtration of $X$.  Given a function $\gamma: X \to \R$, we introduce a Delaunay
bifiltration $\DelCech_{\filt}(\gamma)$ that satisfies an analogous topological
equivalence, ensuring that $\DelCech_{\filt}(\gamma)$ topologically encodes the
offset filtrations of all sublevel sets of $\gamma$, as well as the topological
relations between them.  $\DelCech_{\filt}(\gamma)$ is of size
$O(|X|^{\ceil{\frac{d+1}{2}}})$, which for $d$ odd matches the worst-case size
of $\Del_{\filt}(X)$.  Adapting the Bowyer-Watson algorithm for computing Delaunay triangulations, we give a simple, practical algorithm to compute $\DelCech_{\filt}(\gamma)$ in time $O(|X|^{\ceil{\frac{d}{2}}+1})$.  Our implementation, based on CGAL, computes $\DelCech_{\filt}(\gamma)$ with modest overhead compared to computing $\Del_{\filt}(X)$, and handles tens of thousands
of points in $\mathbb{R}^3$ within seconds.
\end{abstract}

\maketitle

\section{Introduction}
\subparagraph{Background and motivation.}
The \emph{offset filtration} of a finite set $X\subset\R^d$ is a multi-scale representation of $X$ which plays a major role in topological data analysis (TDA).
By definition, this is the nested family of spaces $\offset_{\filt}(X) = (\offset_r(X))_{r\geq 0}$, where %
$\offset_r(X)$ is the union of closed balls of radius $r$ centered at the points
in $X$.  The \emph{persistent homology} of $\offset_{\filt}(X)$ is one of the most
commonly computed objects in TDA.  However, $\offset_{\filt}(X)$ is not a
combinatorial object, and is therefore difficult to compute directly. So one
instead usually computes the \emph{Delaunay filtration}
$\Del_{\filt}(X) = (\Del_{r}(X))_{{r\geq 0}}$ (also called the
$\alpha$-filtration), a filtration of the Delaunay triangulation
$\Del(X)$ of $X$; see~\cref{fig:offset_delaunay}.   By a functorial variant of the classical nerve theorem~\cite{bauer2023unified}, $\Del_{\filt}(X)$ is topologically equivalent to $\offset_{\filt}(X)$ and therefore has the same persistent homology; see \cref{sec:bifiltrations} for a formal definition of topological equivalence.  For TDA computations with low-dimensional data, $\Del_{\filt}(X)$ is preferable to alternative simplicial filtrations such as the \v{C}ech and (Vietoris)-Rips filtrations because, as discussed below, it is asymptotically smaller and much more scalable in practice.
\begin{figure}
  \centering
  \includegraphics[page=1, scale=.50]{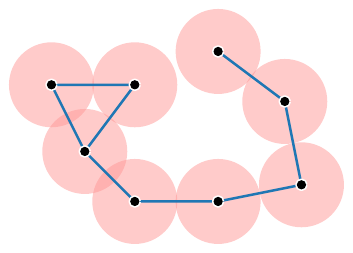}
  \qquad
  \includegraphics[page=2, scale=.50]{figures/offset.pdf}
  \caption{\small Offsets $\offset_{r}(X)\subset \offset_{r'}(X)$ (pink) and Delaunay complexes $\Del_{r}(X)\subset \Del_{r'}(X)$
 (blue) of a point set $X\subset \R^2$.}
  \label{fig:offset_delaunay}
\end{figure}

The index at which a simplex $\sigma$ first appears
 in $\Del_{\filt}(X)$ is defined to be the minimum radius of an empty circumsphere of $\sigma$.  If we instead define this index to be the radius of the minimum enclosing ball (meb) of $\sigma$, we obtain a topologically equivalent filtration, the \emph{Delaunay-\v{C}ech filtration} $\DelCech_{\filt}(X)$
 \cite{bauerMorseTheoryCech2016}.

The point set $X$ often comes equipped with a function $\gamma\colon X\to \R$.  One then wishes to refine the topological analysis of $X$ to take $\gamma$ into account.
In some cases, $\gamma$ is provided by an application; for example, if $X$ is the set of atom centers of a biomolecule, $\gamma$ may be a partial charge function \cite{cang2018representability,cang2020persistent,keller2018persistent}.
In other cases, $\gamma$ is constructed from $X$ itself, e.g., as an estimate of curvature or codensity (e.g., the inverse or negation of a density function).  The case where $\gamma$ is a codensity estimate is particularly important in TDA because in the presence of outliers, $\offset_{\filt}(X)$ can be insensitive to topological structure in the high-density regions of $X$~\cite{carlssonTheoryMultidimensionalPersistence2009,carlsson2008local}.%

Carlsson and Zomorodian~\cite{carlssonTheoryMultidimensionalPersistence2009} propose to analyze $\gamma$ using \emph{multiparameter persistence}, and specifically, to
consider the \emph{sublevel offset bifiltration}
$\offset_{\filt}(\gamma) = (\offset_{r,s}(\gamma))_{(r,s)\in A}$,
where $A=[0,\infty)\times\R$, $\offset_{r,s}(\gamma)=\offset_r(X^{s})$, and $X^{s}=\gamma^{-1}(-\infty,s]$ is the \emph{$s$-sublevel set} of $\gamma$.
This is indeed a bifiltration, in the sense that
$\offset_{r,s}(\gamma)\subset\offset_{r',s'}(\gamma)$ whenever $r\leq r'$ and
$s\leq s'$.  \cref{fig:offset} illustrates the sublevel offset bifiltration of a codensity function.%

As detailed below, multiparameter persistence is an active area of TDA, providing well-developed computational tools for working with bifiltrations and their persistent homology \cite{botnan2022introduction}.  One would like to apply these tools to $\offset_{\filt}(\gamma)$, but as with $\offset_{\filt}(X)$, direct computation of $\offset_{\filt}(\gamma)$ is difficult.
Thus, one wants an equivalent simplicial bifiltration that can be computed efficiently.  The goal of this paper is to introduce such a bifiltration.

\begin{figure}
  \centering
  \includegraphics[width=0.5\textwidth]{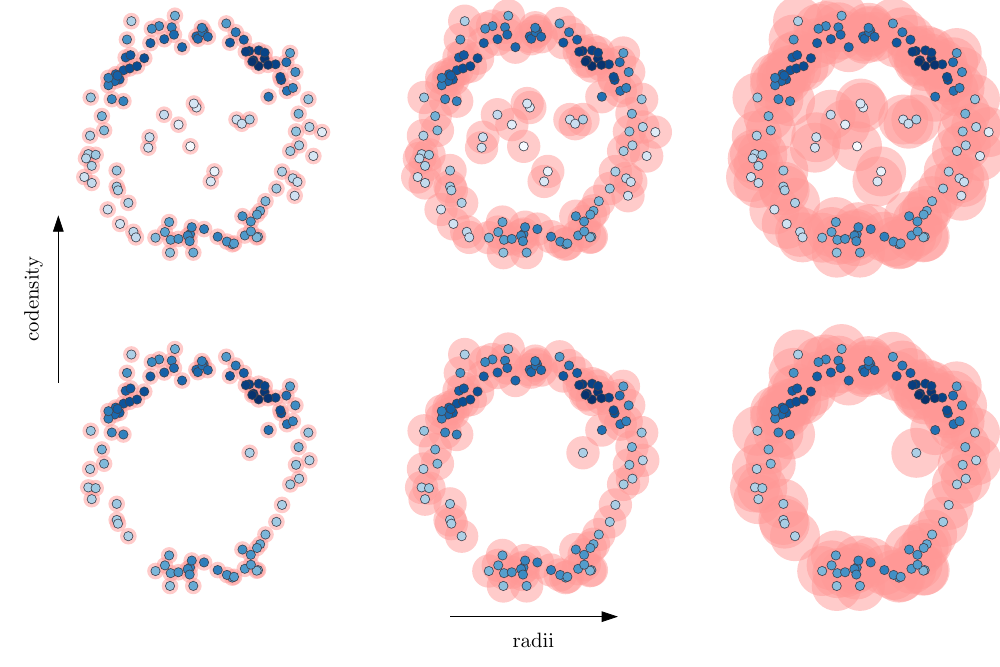}
  \caption{\small Part of the sublevel offset bifiltration $\offset_{\filt}(\gamma)$ of  a
     codensity function $\gamma\colon X\to\bR$, where $X\subset \R^2$.   Lightness of shading is proportional to the value of  $\gamma$.%
    }
  \label{fig:offset}
\end{figure}

\subparagraph{Contributions.}
In this work, we introduce the \emph{sublevel Delaunay-\v{C}ech bifiltration}
$\DelCech_{\filt}(\gamma) = (\DelCech_{r,s}(\gamma))_{(r,s)\in A}$, a bifiltered analogue of $\DelCech_{\filt}(X)$ which is topologically equivalent to $\offset_{\filt}(\gamma)$,
provably small in size, and readily computed in practice.
The definition of $\DelCech_{\filt}(\gamma)$ is not obvious: one might hope to define $\DelCech_{\filt}(\gamma)$ as a bifiltration of the Delaunay triangulation of $X$, but this fails because Delaunay triangulations
are not monotonic with respect to insertion of new points; see~\cref{fig:basic_delaunay}.
\begin{figure}
    \centering
    \includegraphics[page=1, scale=0.6]{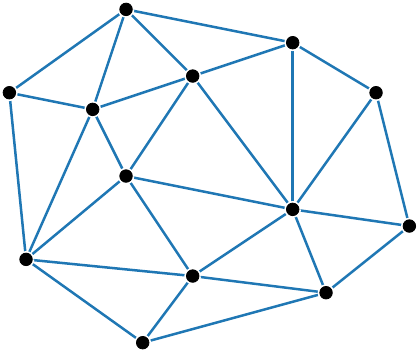}
    \hspace{5em}
    \includegraphics[page=2, scale=0.6]{figures/non_monotonicity.pdf}

    \caption{\small The Delaunay triangulations $\Del(X)$ and $\Del(Y)$ of
      planar point sets $X$ (left) and
      $Y = X\cup\Set{y}$ (right, with $y$ in red). Note that $\Del(X)\not\subset \Del(Y)$.}\label{fig:basic_delaunay}
\end{figure}
We instead define  $\DelCech_{\filt}(\gamma)$ as a bifiltration of a novel simplicial complex $\Inc(\gamma)$ with vertex set $X$, the  \emph{incremental Delaunay complex}; see \cref{def:incremental_delaunay}.  $\Inc(\gamma)$ contains the Delaunay triangulation of every sublevel set of $\gamma$, as well as additional simplices needed to encode the topological relationships between consecutive levels.  $\mathcal I(\gamma)$ embeds in $\R^{d+1}$ via the celebrated parabolic lift paradigm for Delaunay triangulations.

We characterize $\Inc(\gamma)$ in terms of two standard (and closely related) incremental constructions of Delaunay triangulations: the flipping algorithm~\cite{edelsbrunner1996incremental,lawsonPropertiesNdimensionalTriangulations1986}
and Bowyer-Watson algorithm~\cite{bowyerComputingDirichletTessellations1981,watsonComputingNdimensionalDelaunay1981}.
These characterizations make clear that both algorithms implicitly compute $\Inc(\gamma)$.  In fact, an implementation of either algorithm can be used off the shelf to explicitly compute $\Inc(\gamma)$ with just a few additional lines of code.  Our code\footnote{\url{https://bitbucket.org/mkerber/function_delaunay}} computes $\Inc(\gamma)$ using \textsc{CGAL}'s implementation of the Bowyer-Watson algorithm in arbitrary dimensions~\cite{cgal_dt_dd} and \textsc{Gudhi}'s simplex tree data structure for simplicial complexes~\cite{simplex_tree_paper,gudhi:FilteredComplexes}.

Leveraging the characterization of $\Inc(\gamma)$ in terms of flips, we observe
that $\Inc(\gamma)$ is of size $O(|X|^{\ceil{\frac{d+1}{2}}})$, assuming $X$ in
general position. The Delaunay triangulation $\Del(X)$ has size
$\Theta(|X|^{\ceil{\frac{d}{2}}})$ in the worst case, so for $d$ odd, the worst-case
sizes of $\Del(X)$ and $\Inc(\gamma)$ are asymptotically equal.  We use our size bound for $\Inc(\gamma)$ to show that computation of $\Inc(\gamma)$ requires time $O(|X|^{\ceil{\frac{d}{2}}+1})$.

Given $\Inc(\gamma)$, we define $\DelCech_{\filt}(\gamma)$ by taking $\sigma\in
\Inc(\gamma)$ to appear in $\DelCech_{\filt}(\gamma)$ at index $(r_\sigma,s_\sigma)\in A$, where $r_\sigma$ is the radius of the meb of $\sigma$ and $s_\sigma=\max_{x\in \sigma} \gamma(x)$.  Alternatively, we can take each $\sigma\in \Inc(\gamma)$ to appear at index $(\rho_\sigma,s_\sigma)$, where $\rho_\sigma$ is the radius
of a smallest sphere witnessing that $\sigma\in\Inc(\gamma)$; see \cref{def:incremental_delaunay}.  We call the resulting bifiltration $\Del_{\filt}(\gamma)$ the \emph{sublevel Delaunay bifiltration}; this is a bifiltered analogue of $\Del_{\filt}(X)$. Both choices of radii can be efficiently computed, but since the computation of meb radii is conceptually simpler and efficient software is readily available for this\footnote{\url{https://people.inf.ethz.ch/gaertner/subdir/software/miniball.html}}~\cite{cgal_meb, gaertner-esa}, our experiments focus on $\DelCech_{\filt}(\gamma)$.  We prove that
$\Del_{\filt}(\gamma)$, $\DelCech_{\filt}(\gamma)$ and $\offset_{\filt}(\gamma)$ are topologically
equivalent using Bauer and Edelsbrunner's \textit{selective Delaunay complexes}~\cite{bauerMorseTheoryCech2016} and discrete Morse theory.

Our computational experiments demonstrate that our algorithm for computing $\DelCech_{\filt}(\gamma)$ is highly effective in practice and, for low-dimensional Euclidean data, is far more scalable than the \emph{sublevel Rips bifiltration}, a standard alternative construction discussed below.  For instance, for $32000$ points in $\R^3$,
$\DelCech_{\filt}(\gamma)$ is computed in at most half a minute.  This enables the application of 2-parameter persistence to much larger low-dimensional data sets than previously possible. %

\subparagraph{Related work.}
Besides $\Del_{\filt}(X)$, two other simplicial filtrations are commonly constructed from a point cloud $X\subset \R^d$, the \emph{\v{C}ech} and \emph{Rips filtrations} $\cech_{\filt}(X)$ and $\rips_{\filt}(X)$. %
Unlike $\Del_{\filt}(X)$,  both the \v{C}ech and Rips filtrations extend straightforwardly to \emph{sublevel \v Cech} and \emph{sublevel Rips bifiltrations}  $\cech_{\filt}(\gamma)$, $\rips_{\filt}(\gamma)$ of a function $\gamma\colon X\to \R$, in essentially the same way that $\offset_{\filt}(X)$ extends to $\offset_{\filt}(\gamma)$  \cite{carlssonTheoryMultidimensionalPersistence2009}.
Like $\Del_{\filt}(X)$, $\cech_{\filt}(X)$ is topologically equivalent to $\offset_{\filt}(X)$, and by Jung's theorem \cite{jung1901ueber}, $\rips_{\filt}(X)$ is a 2-approximation to $\offset_{\filt}(X)$ (in the log-homotopy interleaving distance, see \cite{blumberg2017universality}); both results extend to the corresponding bifiltrations.  $\rips_{\filt}(X)$ and $\rips_{\filt}(\gamma)$ have the advantage that their definition and computation extend naturally to arbitrary finite metric spaces.

The $k$-skeleta of both $\rips_{\filt}(X)$ and $\cech_{\filt}(X)$ %
 have size $\Theta(|X|^{k+1})$, and the same is true for $\cech_{\filt}(\gamma)$ and $\rips_{\filt}(\gamma)$.  Therefore, one computes the $k$-skeleta of these objects only for very small $k$ (say, $k\leq 3$), which is sufficient for homology computations in dimension up to $k-1$.
 Even so, the size of these (bi)filtrations can impede practical use.  Therefore, for small values of $d$, %
it is usually preferable to work with $\Del_{\filt}(X)$ rather than $\rips_{\filt}(X)$ or $\cech_{\filt}(X)$.

 That said, highly optimized methods have been developed for computing persistent homology of the (1-parameter) Rips filtrations of arbitrary finite metric spaces \cite{bauerRipserEfficientComputation2021,glisse2022swap,boissonnat2020edge}, and these are used widely.  Recent work has focused on adapting some such optimizations to the computation of sublevel Rips bifiltrations \cite{bauer2023efficient,alonso2023filtration}.   Sublevel Rips bifiltrations have recently been applied to problems in computational chemistry \cite{keller2018persistent} and cancer imaging \cite{vipond2021multiparameter,benjamin2022multiscale}.  As these applications involve low-dimensional Euclidean data, future applications along similar lines may benefit from instead using the Delaunay bifiltrations introduced in this paper.
 The standard approach to computing $\cech_{\filt}(X)$ \cite{gudhi:CechComplex} adapts to the computation of $\cech_{\filt}(\gamma)$, though to our knowledge this has not yet been implemented.

The problem of non-monotonicity of Delaunay triangulations has been considered in prior work by Reani and Bobrowski~\cite{reaniCoupledAlphaComplex2021}.
Given two point sets $X,Y\subset \R^d$, they introduce a computable \emph{coupled Delaunay filtration} $\mathcal A_{\filt}(X,Y)$ containing both $\Del_{\filt}(X)$ and $\Del_{\filt}(Y)$ as subfiltrations
 such that the zigzag \[\Del_{\filt}(X)\hookrightarrow \mathcal A_{\filt}(X,Y)\hookleftarrow \Del_{\filt}(Y)\] is topologically equivalent to
\[\offset_{\filt}(X)\hookrightarrow \offset_{\filt}(X\cup Y)\hookleftarrow \offset_{\filt}(Y).\]
See also~\cite{dimontesano2022persistent,biswas2022size} for a recent extension to three or more point sets.   In the special case that $Y=X\cup\{p\}$, our main construction yields a filtration much smaller than  $\mathcal A_{\filt}(X,Y)$ with the same property.  More generally, a construction similar to ours yields a smaller filtration whenever $X\cap Y\ne \varnothing$; we leave further discussion of this (including computational considerations) to future work.

When $\gamma$ is a codensity function, the sublevel bifiltrations of $\gamma$ enable data analysis that is sensitive to both density and spatial scale.  However, one disadvantage of working with these bifiltrations is that such $\gamma$ always depends on a choice of \emph{bandwidth parameter}, and it is often unclear how to select this.  Several density-sensitive bifiltrations have been proposed that do not depend on a bandwidth parameter \cite{lesnick2015interactive,blumberg2022stability,sheehy2012multicover}, but these are more computationally costly than either $\rips_{\filt}(\gamma)$ or $\Del_{\filt}(\gamma)$.  Most closely related to our work is the \emph{rhomboid bifiltration}, introduced in \cite{edelsbrunner2021multi} and studied in \cite{corbet2023computing}.  This is a polyhedral extension of $\Del(X)$ which satisfies a strong robustness property \cite{blumberg2022stability} and is computable in polynomial time \cite{edelsbrunner2023simple}.  However, the rhomboid bifiltration of $X\subset \R^d$ has exactly ${|X|}\choose{d+1}$ top-dimensional cells, which means it is usually too large to fully compute.

In the computational pipeline of multiparameter persistence, our task of obtaining a bifiltration from a data set is just the first step; one then usually aims to compute and visualize \emph{topological invariants} of the bifiltration~\cite{lesnick2015interactive,carlssonTheoryMultidimensionalPersistence2009,cai2021elder,scoccola2023persistable,dey_et_al:LIPIcs.SoCG.2022.34}.  Among the many invariants that have been proposed, \emph{signed barcodes} are among the most actively studied~\cite{botnan2021signed,kim2021generalized,botnan2022bottleneck,mccleary2022edit,asashiba2019approximation,blanchette2021homological,morozov2021output}. There is also substantial interest in invariants taking values in linear spaces, since these are readily incorporated into standard machine learning and statistics pipelines \cite{loiseaux2023stable,vipond2018multiparameter,corbet2019kernel,carriere2020multiparameter,xin2023gril}.
One can also work with 2-parameter persistent homology via \emph{distance functions} \cite{kerber2020efficient,adcock2014classification,bjerkevik2021ell,keller2018persistent}. All of these approaches are computationally challenging and are most practical when the input bifiltration is small; this is the practical motivation of our work.
As an intermediate step in the computation of invariants, it is often beneficial to employ algebraic compression techniques to replace the bifiltration with a smaller object,
for instance by computing a minimal representation of homology. Such compressed representations can often be computed with surprising efficiency in the 2-parameter setting \cite{fkr-compression,lw-computing,alonso2023filtration,bauer2023efficient}. Our approach can be easily combined with these techniques,
as we show in our experimental evaluation.

\subparagraph{Outline.}
In Section~\ref{sec:incr_del_cmplx}, we define the incremental Delaunay complex, discuss its computation, and bound its size.
In Section~\ref{sec:bifiltrations}, we formally define topological equivalence and state the main topological equivalence results of this work, which are proven in~\cref{sec:proof_of_collapsing_theorem}.
In Section~\ref{sec:implementation}, we report on our experimental
evaluations.
We conclude in Section~\ref{sec:conclusion}.

\section{The incremental Delaunay complex}\label{sec:incr_del_cmplx}

A circumsphere of $Q\subset \bR^{d}$ is a $(d-1)$-sphere $S$ such that all points in $Q$ lie on $S$.
Throughout the paper, we assume that our point sets are in general position,
meaning that every non-empty subset $Q\subset X$ of at most $d+1$ points is affinely independent, and no point of $X\setminus Q$ lies
on the smallest circumsphere of $Q$.
The \emph{Delaunay triangulation} of $X$ is the simplicial complex $\Del(X)$
whose simplices are non-empty subsets $\sigma\subset X$ having an empty circumsphere $S$, that is, no points of $X$ lies inside $S$.

Given a function $\gamma\colon X\to \R$, we totally order $X$ by values of $\gamma$, breaking any ties arbitrarily.
We write $X=\{x_1,x_2,\ldots,x_{|X|}\}$ and for $i\in \{1,\ldots,|X|\}$, we let $X_{i}\subset X$ denote the first $i$ points of $X$.  For a simplex $\sigma\subset X$, let $\max(\sigma)$ be the maximum
point in $\sigma$.

Our definition of the incremental Delaunay complex $\Inc(\gamma)$ will depend only on the order on $X$, and not on the specific values taken by $\gamma$.  Hence, we write $\Inc(X)=\Inc(\gamma)$.  %

\begin{definition}
\label{def:incremental_delaunay}
 The \deff{incremental Delaunay complex}  $\Inc(X)$
  is the simplicial complex whose simplices are the non-empty subsets $\sigma\subset X$ such that there exists a circumsphere $S$ of $\sigma\setminus\max(\sigma)$ satisfying the following two properties:
  \begin{itemize}
  \item  $\max(\sigma)$ is either inside or on $S$ and,
  \item  each $x_i<\max(\sigma)$ is either outside or on $S$.
  \end{itemize}
  We call the infimal radius of such $S$ the \deff{incremental Delaunay radius} of $\sigma$, and denote it as $\rho_\sigma$.
\end{definition}
Thus, writing $x_k:=\max(\sigma)$, the sphere $S$ of
\cref{def:incremental_delaunay} must be empty of points in $X_{k-1}$ and $x_k$
must be either on or inside $S$.  There exists such $S$ with $x_k$ on $S$ if and
only if $\sigma\in \Del(X_k)$.  It follows that $\bigcup_{i=1}^{\abs{X}} \Del(X_{i})\subset \Inc(X)$.
Note also that $\Inc(X_i)\subset\Inc(X_j)$ whenever $i\leq j$, i.e., the incremental Delaunay complex is monotone under point insertions.

\subparagraph{Computation.}
We now explain how $\Inc(X)$ can be computed via the Bowyer-Watson algorithm.
Besides containing $\Del(X_{i})$ for each $i$, $\Inc(X)$ usually contains additional simplices of dimension $d+1$ and their faces.
To elaborate, for a $d$-simplex $\sigma\in\Del(X_i)$, we will say that the pair $(\sigma,x_{i+1})$ is a \emph{conflict pair} if $x_{i+1}$ is inside the circumsphere of $\sigma$.  We also call $\sigma$ an \emph{$i$-conflict}, or simply a \emph{conflict}.
By definition, we then have $\sigma\cup\{x_{i+1}\}\in\Inc(X)$, and all $(d+1)$-simplices of $\Inc(X)$ arise from conflict pairs in this way. We summarize this finding:
\begin{lemma}\label{prop:bw}
The $(d+1)$-simplices in $\Inc(X)$ are in one-to-one-correspondence with conflict pairs.
\end{lemma}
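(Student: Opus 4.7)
The plan is to construct an explicit bijection between conflict pairs and $(d+1)$-simplices of $\Inc(X)$. In one direction, the map would send a conflict pair $(\sigma,x_{i+1})$ to the simplex $\sigma\cup\{x_{i+1}\}$. In the other, it would send a $(d+1)$-simplex $\tau\in\Inc(X)$ to the pair $(\tau\setminus\{\max(\tau)\},\max(\tau))$. Once both maps are shown to land in the correct targets, they are visibly inverse to each other, so the bulk of the work lies in verifying well-definedness.

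For the forward direction, I would start from a conflict pair $(\sigma,x_{i+1})$. Since $\sigma$ is a $d$-simplex, the general position hypothesis endows it with a unique circumsphere $S$. The condition $\sigma\in\Del(X_i)$ means $S$ is empty of $X_i$, and the conflict condition places $x_{i+1}$ strictly inside $S$. Setting $\tau=\sigma\cup\{x_{i+1}\}$, we have $\max(\tau)=x_{i+1}$ because $\sigma\subset X_i$; then $S$ is a circumsphere of $\tau\setminus\max(\tau)=\sigma$ that satisfies both conditions of \cref{def:incremental_delaunay}, so $\tau$ is a $(d+1)$-simplex of $\Inc(X)$.

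For the reverse direction, I would take a $(d+1)$-simplex $\tau\in\Inc(X)$, write $x_k=\max(\tau)$, and set $\sigma=\tau\setminus\{x_k\}$. Then $\sigma$ is a $d$-simplex whose unique circumsphere $S$ must coincide with the sphere afforded by \cref{def:incremental_delaunay}, and the definition places $x_k$ inside or on $S$ and every point of $X_{k-1}\setminus\sigma$ outside or on $S$.

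The main (and essentially only) obstacle is upgrading these non-strict conditions to the strict ones required by the definition of a conflict pair. This is where general position does the real work: no point of $X\setminus\sigma$ may lie on $S$, so $x_k$ is strictly inside $S$ while every point of $X_{k-1}\setminus\sigma$ is strictly outside $S$, giving both $\sigma\in\Del(X_{k-1})$ and the conflict condition on $x_k$. With that in hand, $(\sigma,x_k)$ is a conflict pair, and the two constructions are manifestly mutually inverse, completing the bijection.
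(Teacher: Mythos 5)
Your argument is correct and follows the same route the paper takes (the paper in fact treats the lemma as immediate from the definitions and offers no written proof). Your care in invoking the general-position assumption to upgrade the non-strict inclusion/exclusion in \cref{def:incremental_delaunay} to the strict conditions required of a conflict pair is exactly the detail the paper leaves implicit, and it is handled correctly.
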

We would like to leverage \cref{prop:bw} to compute $\Inc(X)$, by computing all conflict pairs.  But in general, the $(d+1)$-simplices of $\Inc(X)$ and their faces do not yield the entire complex $\Inc(X)$.
The reason is that a point $x_{i+1}$
might lie outside the convex hull of $X_i$, without giving rise to any conflict pair.
Such cases can be avoided if we assume that the simplex spanned by the first $d+1$ points of $X$ is the convex hull of $X$;
we call this the \emph{$\Delta$-property}.  As we explain in \cref{rem:Delta_Prop} below, for the purpose of computing $\Inc(X)$, this assumption entails no loss of generality.

Assuming the $\Delta$-property, any $x_{i+1}$ with $i+1>d+1$
lies in the convex hull of the previous points, and thus inside some $d$-simplex $\sigma$ of $\Del(X_i)$,
which implies that $(\sigma,x_{i+1})$ is a conflict pair.  It is then plausible that
the $(d+1)$-simplices of $\Inc(X)$ fully determine $\Inc(X)$.  The next lemma asserts that this is indeed true:
\begin{lemma}\label{lem:purity_lemma}
If $|X|\geq d+2$ and $X$ has the $\Delta$-property, then every simplex of $\Inc(X)$
is a face of a $(d+1)$-simplex.
\end{lemma}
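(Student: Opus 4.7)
My plan is a case analysis on $k$, the index of $\max(\sigma)$, and on whether $\sigma\in\Del(X_k)$. In each case the goal is to produce a conflict pair $(\tau,x_j)$ with $\sigma\subset\tau\cup\{x_j\}$, so that \cref{prop:bw} yields the required $(d+1)$-simplex. Write $\sigma_0:=\sigma\setminus\{x_k\}$ and $n:=|\sigma_0|$.

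First I dispatch the initial case $k\leq d+1$. Then $\sigma\subset X_{d+1}$, and by the $\Delta$-property $X_{d+1}$ is a $d$-simplex with $x_{d+2}$ in its convex hull; general position then forces $x_{d+2}$ strictly inside its circumsphere, so $(X_{d+1},x_{d+2})$ is a conflict pair and $\sigma\subset X_{d+1}\cup\{x_{d+2}\}$. Next I treat $k\geq d+2$ with $\sigma\in\Del(X_k)$: by purity of the Delaunay triangulation I enlarge $\sigma$ to a $d$-simplex $\sigma'\in\Del(X_k)$, which must still contain $x_k$. Hence $\sigma'$ is created at stage $k$ of the Bowyer--Watson algorithm, so $\sigma'=\{x_k\}\cup\beta$ with $\beta$ a $(d-1)$-simplex on the boundary of the conflict region of $x_k$ in $\Del(X_{k-1})$; any such $\beta$ is a face of some conflict $d$-simplex $\tau\in\Del(X_{k-1})$, giving a conflict pair $(\tau,x_k)$ with $\sigma\subset\tau\cup\{x_k\}$.

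The heart of the argument is the remaining subcase $k\geq d+2$ and $\sigma\notin\Del(X_k)$. Any witness sphere $S$ must here have $x_k$ strictly inside (otherwise $S$ would certify $\sigma\in\Del(X_k)$), and this forces $\sigma_0\in\Del(X_{k-1})$. I will work in the $(d-n+1)$-dimensional pencil of spheres through $\sigma_0$, encoding each $y\in X_{k-1}\cup\{x_k\}$ by its linear power functional $f_y$ on the pencil. Let
\[
Q := \{\text{spheres in the pencil with } X_{k-1}\text{ on or outside}\}
\quad\text{and}\quad Q'' := Q\cap\{f_{x_k}\leq 0\}.
\]
Then $S\in Q''$, and general position combined with $k\geq d+2$ yields $|X_{k-1}\setminus\sigma_0|\geq d-n+1$, which I use to show that $Q''$ contains no line and therefore has a vertex $v$. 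I classify $v$ by whether $f_{x_k}$ is tight there: if not, the $(d-n+1)$ tight constraints all come from $X_{k-1}$ and pin down a $d$-simplex $\tau\supset\sigma_0$ of $\Del(X_{k-1})$ whose circumsphere contains $x_k$, giving the desired conflict pair; if $f_{x_k}$ is tight, the defining sphere passes through $\sigma_0$, $x_k$, and $d-n$ other points of $X_{k-1}$, and these $d+1$ points form a $d$-simplex of $\Del(X_k)$ containing $\sigma$, contradicting $\sigma\notin\Del(X_k)$.

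The main obstacle is verifying that $Q''$ has a vertex even when $Q$ itself is unbounded (which happens precisely when $\sigma_0$ lies on the boundary of the convex hull of $X_{k-1}$), and confirming that the degenerate ``$f_{x_k}$ tight'' vertex type genuinely implies $\sigma\in\Del(X_k)$. I expect both to reduce to general-position considerations: the no-line property to having $\geq d-n+1$ constraints in sufficiently generic directions, and the degeneracy exclusion to the no-$(d+2)$-cospherical assumption, together with the interior-of-convex-hull consequence of the $\Delta$-property.
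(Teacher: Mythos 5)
Your argument is correct and takes a genuinely different route from the paper's. The paper argues by contradiction: it assumes a maximal simplex $\tau$ of dimension $< d+1$ exists, uses a one-parameter interpolation between two circumspheres of $\tau\setminus\{\max(\tau)\}$ to conclude $\tau \in \Del(X_{k})$ (otherwise $\tau$ would acquire a coface along the way), then deduces $\dim\tau = d$ by purity and locates a conflict pair at $\max(\tau)$ to reach a contradiction; this last step essentially coincides with your second case. You instead construct the $(d+1)$-coface directly, splitting on whether $\sigma \in \Del(X_k)$ and resolving the hard case via a vertex of the polyhedron $Q''$ in the full $(d+1-n)$-dimensional pencil of circumspheres of $\sigma_0$. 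The two gaps you flag do close under the paper's general-position hypothesis: the recession cone of $Q$ is pointed because $d+1$ or more affinely independent points of $X_{k-1}$ cannot all satisfy a nontrivial affine relation, so a line direction in the pencil cannot keep every $y\in X_{k-1}$ at power zero (and note that $Q$ being merely unbounded is not itself an obstruction to having a vertex); and the no-$(d{+}2)$-cospherical assumption ensures exactly $d+1-n$ constraints are tight at any vertex, so your two vertex types are exhaustive and the degenerate one genuinely produces a $d$-simplex of $\Del(X_k)$ containing $\sigma$. Overall your proof is more constructive and closer in spirit to the selective-Delaunay/power-function viewpoint of \cref{sec:proof_of_collapsing_theorem}, at the cost of more polyhedral bookkeeping; the paper's proof is shorter because the maximality assumption lets it get away with a one-dimensional pencil argument rather than a vertex analysis of the full feasible region.
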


\begin{proof}

Consider the simplex $\alpha=[x_1,\ldots,x_{d+1}]$.  %
By the $\Delta$-property and the fact that $|X|\geq d+2$, the point $x_{d+2}$ lies in $\alpha$.  Hence $(\alpha,x_{d+2})$ is a conflict pair, so $\alpha$ is a  face of the simplex $[x_1,\ldots,x_{d+2}]$ in $\Inc(X)$.
Now assume for a contradiction that a simplex $\tau\in\Inc(X)$  with $\dim(\tau)<d+1$ is maximal (i.e., has no proper coface).  Let $x_{i+1}=\max(\tau)$ and $\tau'=\tau\setminus \{x_{i+1}\}$. As $\alpha$ is not maximal, we have $i+1\geq d+2$.

We claim that $\tau\in\Del(X_{i+1})$.  For all upcoming notation, see Figure~\ref{fig:pencil} for illustrations.  By the definition of $\Inc(X)$, there is a circumsphere $S$
of $\tau'$ such that  $x_{i+1}$ is either on or inside $S$, and no point of $X_i$ lies inside $S$.  %
Let $S'$ be a circumsphere of $\tau$; this exists by general position and the assumption that $\dim(\tau)<d+1$.  We will show that no point of $X_i$ lies inside $S'$, which proves the claim.
Since both $S$ and $S'$ are circumspheres
of $\tau'$, there exists a collection $(S_t)_{t\in [0,1]}$ of circumspheres of $\tau'$ interpolating between $S$ and $S'$; that is, $S_0=S$, $S_1=S'$, and the radius and center of $S_t$ both vary continuously with $t$.  We call this collection a \emph{pencil}.  In fact, we can choose the pencil so that $x_{i+1}$ lies on or inside each $S_t$.
Recall that no point of $X_i$ lies inside $S$.  Thus, if a point of $X_i$ lies inside $S'$, continuity of the pencil implies that there is a circumsphere $S''$ in the pencil for which a point $p$ of $X_i\setminus\tau$ lies on $S''$ and no point of $X_i$ lies inside $S''$.  Then, since $x_{i+1}$ lies on or inside $S''$, $\tau\cup \{p\}$ is a simplex of $\Inc(X)$, and hence $\tau$ is not maximal, a contradiction.  We conclude that no point of $X_i$ lies inside $S'$, and hence that $\tau\in\Del(X_{i+1})$, as claimed.

\begin{figure}
\centering
\includegraphics[width=6cm]{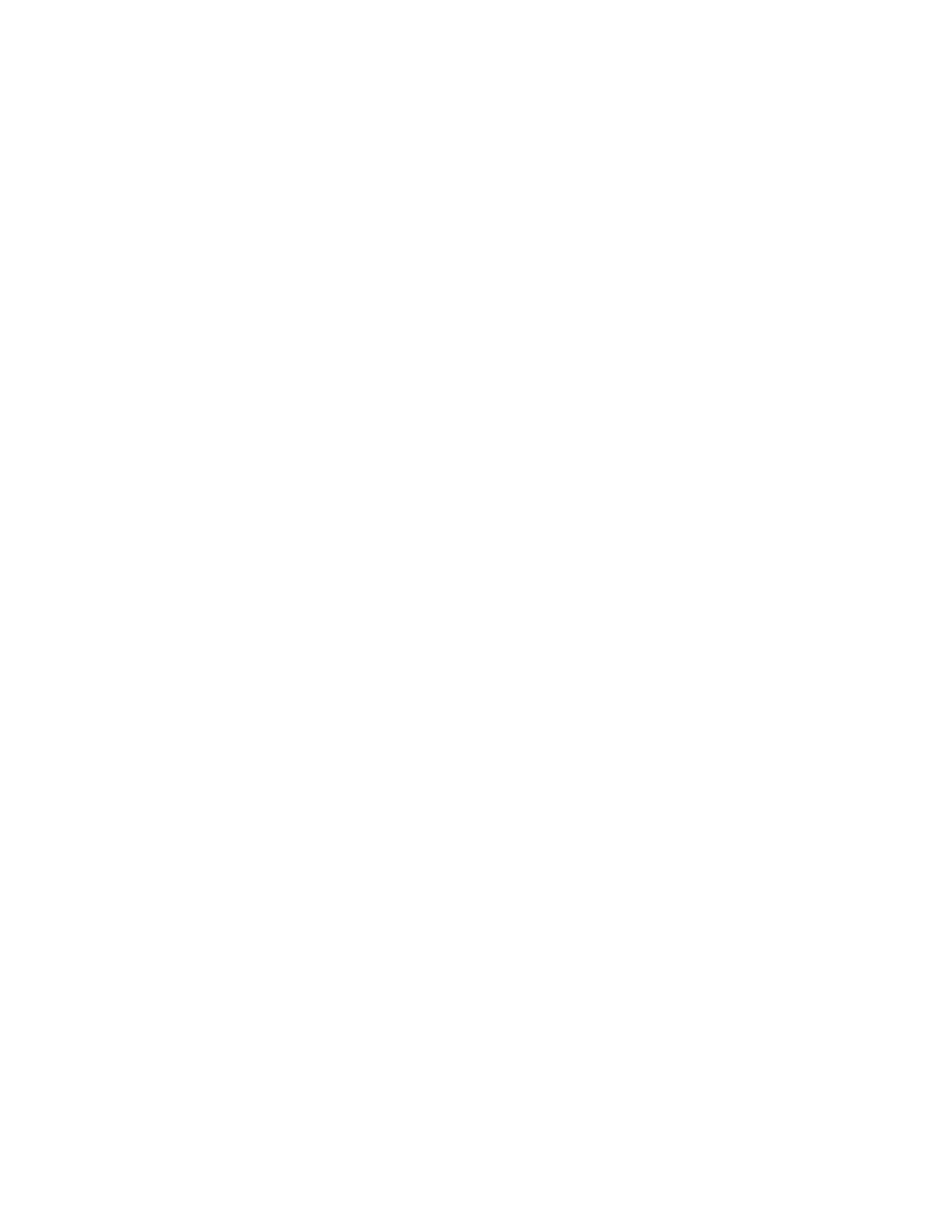}
\caption{\small Illustration of the proof of Lemma~\ref{lem:purity_lemma}. Here, $X_i=\{a,b,p\}$ and $\tau=\{a,b,x_{i+1}\}$. Note that $S$ contains
no point of $X_i$ by assumption, whereas $S'$ might contain such points (like $p$ in the figure).}
\label{fig:pencil}
\end{figure}

Since $\tau\in\Del(X_{i+1})$ and all maximal simplices of $\Del(X_{i+1})$ are of dimension $d$, we have $\dim(\tau)=d$.
The sphere $S'$ also witnesses that $\tau'\in\Del(X_i)$. Since $\tau'$ is a $(d-1)$-simplex, there exist either two $d$-dimensional cofaces
$\sigma, \sigma'$ of $\tau$ in $\Del(X_i)$, or (when $\tau'$ is a convex hull simplex) only one such coface $\sigma$.
In the former case, since $\tau$ is coface of $\tau'$ in $\Del(X_{i+1})$, one
of the $d$-simplices, say $\sigma$, is not in $\Del(X_{i+1})$.
In the latter case, $\sigma$ is not in $\Del(X_{i+1})$ either, because otherwise $\tau'$ would not be on the convex hull
anymore, contradicting the $\Delta$-property.
In either case, this implies that $x_{i+1}$ lies inside the circumsphere
of $\sigma$, and hence $(\sigma,x_{i+1})$ is a conflict pair. However, since $\tau'$ is a face of $\sigma$,
the simplex $\sigma\cup\{x_{i+1}\}$ is a coface of $\tau$, a contradiction.
\end{proof}

\cref{prop:bw,lem:purity_lemma} imply that if $X$ has the $\Delta$-property, then to compute $\Inc(X)$ it suffices to determine all conflict pairs.  These pairs are computed by the Bowyer-Watson algorithm~\cite{bowyerComputingDirichletTessellations1981,watsonComputingNdimensionalDelaunay1981}, a standard incremental algorithm for Delaunay triangulation computation, which we now recall in outline; see also ~\cref{fig:bowyer_watson}.  Given $\Del(X_i)$, to compute $\Del(X_{i+1})$ the algorithm computes all $i$-conflicts. These simplices are removed from the triangulation, leaving an untriangulated star-convex region $R$
with center $x_{i+1}$.  $R$ is then retriangulated by the simplicial cone with base the boundary of $R$ and apex $x_{i+1}$, yielding $\Del(X_{i+1})$.

To compute all $i$-conflicts, we must first compute one of them; we call this step \emph{conflict location}.  Since $R$ is connected, once we have found a single $i$-conflict, we can efficiently find the remaining ones by searching the adjacency graph of the $d$-simplices of $\Del(X_i)$.  A naive conflict location strategy (which nevertheless suffices for our complexity analysis below) is to iterate through the $d$-simplices
of $\Del(X_i)$ until a conflict is found.  Alternatively, a standard approach is to maintain a
data structure for efficient point location, which can then be queried for the $d$-simplex containing $x_{i+1}$.  Several such data structures have been proposed and implemented~\cite{devillers-delaunay,dpt-walking,bt-randomized,boissonat2009incremental}.  Our code uses CGAL's implementation of point location.

\begin{remark}\label{rem:Delta_Prop}
To ensure that the $\Delta$-property holds, we can augment $X$ with a set $T$ of $(d+1)$
extra points ordered below $X$.  Choosing the points of $T$ to be sufficiently far away from $X$ ensures that
$\Inc(X)$ is the subcomplex of $\Inc(X\cup T)$ obtained by removing all points of $T$ and their cofaces.  This is a standard technique in geometric algorithms; see for instance~\cite[Sec 9.3]{dutch_book}.  $T$ need not be chosen explicitly, but can be handled symbolically; in fact, it suffices to add a single \emph{point at infinity}, as in  \cite{boissonnat2000triangulations}.
\end{remark}

\begin{figure}[h]
  \centering
 \includegraphics[page=1, scale=0.6]{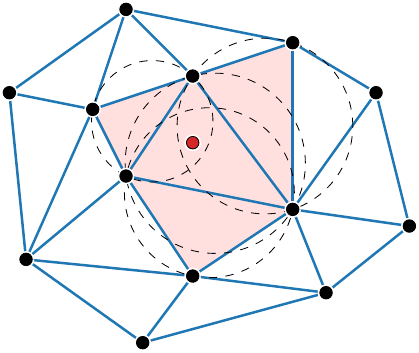}
 \hspace{1.5em}
  \includegraphics[page=2, scale=0.6]{figures/bowyer_watson.pdf}
  \hspace{1.5em}
  \includegraphics[page=3, scale=0.6]{figures/bowyer_watson.pdf}
  \caption{\small The Bowyer-Watson algorithm.  Left: $\Del(X_i)$ (blue) and
    $x_{i+1}$ (red). The triangles in conflict with $x_{i+1}$ are shaded.  Center: These triangles are removed.  Right: the hole is retriangulated, yielding $\Del(X_{i+1})$.}\label{fig:bowyer_watson}
\end{figure}

\subparagraph{Size of $\Inc(X)$.}
Assuming the $\Delta$-property and that $d$ is constant, \cref{lem:purity_lemma,prop:bw} together imply that $\Inc(X)$ has $O(m)$ simplices, where $m$ is the total number of conflict pairs encountered by Bowyer-Watson algorithm.
As we explain below,
\begin{proposition}
\label{eq:conflict_pairs_bound}
$m=O(|X|^{\lceil(d+1)/2\rceil})$  in the worst case.
\end{proposition}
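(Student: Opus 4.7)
The plan is to use the parabolic lift $\hat x = (x,\|x\|^2) \in \R^{d+1}$ to realize the $(d+1)$-simplices of $\Inc(X)$ as the full-dimensional cells of a triangulation of a convex polytope in $\R^{d+1}$, and then apply the Upper Bound Theorem. This is morally the same argument used to bound the size of the history DAG in a randomized incremental Delaunay construction, but applied deterministically.

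Write $\hat X_i = \{\hat x_1,\ldots,\hat x_i\}$ and $P_i = \operatorname{Conv}(\hat X_i)$. The standard parabolic-lift dictionary says: a $d$-simplex $\sigma$ with vertices in $X_i$ lies in $\Del(X_i)$ iff $\hat\sigma$ is a lower facet of $P_i$, and $(\sigma,x_{i+1})$ is an $i$-conflict pair iff $\hat x_{i+1}$ lies strictly below the supporting hyperplane of $\hat\sigma$, i.e., iff the lower facet $\hat\sigma$ is visible from $\hat x_{i+1}$. Under the $\Delta$-property, $x_{i+1}\in\operatorname{Conv}(X_i)$ for every $i\geq d+1$, and strict convexity of the paraboloid also forces $\hat x_{i+1}$ strictly below every upper facet of $P_i$ (because the paraboloid lies strictly below every chord through two of its points). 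Hence only the lower facets of $P_i$ can be visible from $\hat x_{i+1}$.

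Passing from $P_i$ to $P_{i+1}$ therefore attaches precisely the simplicial cone from $\hat x_{i+1}$ over the visible lower facets, contributing exactly one $(d+1)$-simplex $\hat\sigma \cup \{\hat x_{i+1}\}$ per $i$-conflict pair $(\sigma,x_{i+1})$. Within a single step, these cone simplices share only cone-boundary faces, and across steps they live in the disjoint shells $P_{i+1}\setminus P_i$, so their interiors are pairwise disjoint. Together with the initial lifted $(d+1)$-simplex $P_{d+1}$, they assemble into a triangulation $T$ of $P_n$ with exactly $m+1$ full-dimensional cells. Since $P_n$ is a convex polytope in $\R^{d+1}$ with $n$ vertices, the Upper Bound Theorem gives $O(n^{\lceil(d+1)/2\rceil})$ as a bound on the number of full-dimensional cells of any triangulation of $P_n$ using only its vertices, which yields the claim.

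The main technical point I expect to have to verify carefully is the equivalence ``conflict pair $\iff$ visible lower facet of $P_i$'' together with the fact that the cones glue into a bona fide triangulation of $P_n$ (no gaps, no overlaps). Both are routine consequences of the parabolic-lift framework and the $\Delta$-property, but they deserve a careful setup, particularly the observation that the upper hull of $P_i$ is never affected by an insertion $\hat x_{i+1}$ when $x_{i+1}\in\operatorname{Conv}(X_i)$.
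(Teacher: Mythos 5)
Your argument takes a genuinely different route from the paper's. The paper reduces the conflict-pair count to the number of flips performed by the incremental flipping algorithm, and then cites the Edelsbrunner--Shah bound on the total number of flips; you instead lift to the paraboloid, realize the $(d+1)$-simplices of $\Inc(X)$ as a triangulation of $\operatorname{Conv}(\hat X)\subset\R^{d+1}$, and appeal to the Upper Bound Theorem. (The paper does remark, without details, that the bound can alternatively be obtained by adapting an incremental-convex-hull complexity analysis, which is close in spirit to your approach; the geometric picture you set up---shells between consecutive $P_i$, visible lower facets, upper hull untouched---is also exactly what the paper sketches informally in its ``lifting perspective'' paragraph. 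So the triangulation structure is sound.)

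However, the last step does not follow as stated: the Upper Bound Theorem bounds the number of faces of a convex polytope (McMullen) or of a simplicial sphere (Stanley), not the number of top-dimensional cells in a triangulation of a solid polytope, and no general UBT-for-ball-triangulations is ``the'' Upper Bound Theorem. Two routine ways to close this gap, one of which you should make explicit: (a) note that your triangulation of $P_n$ is precisely the \emph{placing (pushing) triangulation} of $\hat X$ in insertion order, and placing triangulations are regular, so it is the projection of the lower hull of a lift of $\hat X$ to $\R^{d+2}$; this is a $(d+2)$-polytope with $n$ vertices, and by the Upper Bound Theorem it has $O(n^{\lfloor(d+2)/2\rfloor})=O(n^{\lceil(d+1)/2\rceil})$ facets. (b) Cone the boundary of the triangulated ball $P_n$ with one new apex to obtain a simplicial $(d+1)$-sphere on $n+1$ vertices, to which Stanley's Upper Bound Theorem for simplicial spheres applies directly, giving the same bound on the number of $(d+1)$-cells.

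One small arithmetic slip: $P_{d+1}=\operatorname{Conv}(\hat x_1,\dots,\hat x_{d+1})$ has $d+1$ vertices, so it is a $d$-simplex, not a $(d+1)$-simplex. The first top-dimensional cell is $P_{d+2}$, which already corresponds to the conflict pair $([x_1,\dots,x_{d+1}],x_{d+2})$, so the triangulation has exactly $m$ top-dimensional cells rather than $m+1$. This does not affect the asymptotics.
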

\cref{eq:conflict_pairs_bound} implies the following, which in view of \cref{rem:Delta_Prop}, holds even when $X$ does not satisfy the $\Delta$-property:

\begin{theorem}\label{Thm:Size_Bound}
$\Inc(X)$ has $O(|X|^{\lceil(d+1)/2\rceil})$ simplices.
\end{theorem}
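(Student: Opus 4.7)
The plan is to combine the three structural results already in place—Lemma \ref{prop:bw}, Lemma \ref{lem:purity_lemma}, and Proposition \ref{eq:conflict_pairs_bound}—and then dispose of the case where $X$ does not satisfy the $\Delta$-property via the augmentation trick of Remark \ref{rem:Delta_Prop}. The theorem at this point is essentially a bookkeeping consequence of these facts.

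First I would treat the case where $|X|\geq d+2$ and $X$ has the $\Delta$-property. By Lemma \ref{lem:purity_lemma}, every simplex of $\Inc(X)$ is a face of some $(d+1)$-simplex. A $(d+1)$-simplex has exactly $2^{d+2}-1$ nonempty faces, which is a constant for fixed $d$, so
\[
|\Inc(X)| \;\le\; (2^{d+2}-1)\cdot \#\{(d+1)\text{-simplices of }\Inc(X)\}.
\]
By Lemma \ref{prop:bw}, the number of $(d+1)$-simplices equals the number $m$ of conflict pairs, and by Proposition \ref{eq:conflict_pairs_bound}, $m = O(|X|^{\lceil(d+1)/2\rceil})$. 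This yields the bound in this case.

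Next, for a general $X$ (possibly with $|X|<d+2$, or failing the $\Delta$-property), I would invoke Remark \ref{rem:Delta_Prop} to form $X' = X\cup T$ with $|T|=d+1$, placing the extra points sufficiently far away and ordered below $X$, so that $X'$ has the $\Delta$-property and $\Inc(X)$ embeds as a subcomplex of $\Inc(X')$. Since $|X'| = |X| + d + 1$, applying the previous paragraph to $X'$ gives
\[
|\Inc(X)| \;\le\; |\Inc(X')| \;=\; O\!\left(|X'|^{\lceil(d+1)/2\rceil}\right) \;=\; O\!\left(|X|^{\lceil(d+1)/2\rceil}\right),
\]
as required.

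The only genuine mathematical content is hidden inside Proposition \ref{eq:conflict_pairs_bound}, which is the expected obstacle if it had to be proved here; presumably it is obtained by identifying conflict pairs with facets of certain intermediate convex hulls in $\R^{d+1}$ arising from the parabolic lift and then invoking the upper bound theorem for convex polytopes. Taking that proposition as given, the present statement requires no further work beyond the combination above.
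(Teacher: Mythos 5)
Your proof is correct and follows exactly the same route as the paper: combine Lemma~\ref{lem:purity_lemma} and Lemma~\ref{prop:bw} to bound $|\Inc(X)|$ by a constant multiple of the number of conflict pairs, invoke Proposition~\ref{eq:conflict_pairs_bound}, and reduce the general case to the $\Delta$-property case via the augmentation in Remark~\ref{rem:Delta_Prop}. (Your closing speculation about how Proposition~\ref{eq:conflict_pairs_bound} is proved is also in the right spirit, though the paper's actual proof goes through the Edelsbrunner--Shah bound on flips in the incremental flipping algorithm rather than directly via the upper bound theorem.)
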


The worst-case size of $\Del(X)$ is $O(|X|^{\lceil d/2\rceil})$, which for odd dimensions matches our size bound for $\Inc(X)$.

As we will explain, \cref{eq:conflict_pairs_bound} follows from a result of Edelsbrunner and Shah~\cite{edelsbrunner1996incremental} about \emph{incremental flipping}, another well-known algorithm for computing Delaunay triangulations.  Alternatively, we can easily  deduce \cref{eq:conflict_pairs_bound} by adapting the proof of \cite[Theorem 3.10]{boissonnat2018geometric}, a complexity bound for an incremental convex hull algorithm, but we will not discuss this approach.

Before proving \cref{eq:conflict_pairs_bound}, we recall the incremental flipping algorithm: Assuming the $\Delta$-property for $X$ again, the algorithm computes $\Del(X_{i+1})$ from $\Del(X_i)$ as follows (see \cref{fig:flips}): The convex hull of $d+2$ points in general position has exactly two triangulations; the replacement of one with another is called a \emph{flip}.  We first subdivide the $d$-simplex of $\Del(X_i)$ containing $x_{i+1}$
into $d$+1 $d$-simplices, each with $x_{i+1}$ as a vertex.  We then transform
the resulting triangulation into $\Del(X_{i+1})$ via flips in the following way.  Let $T$ be the triangulation we maintain during the algorithm.  For a
$d$-simplex $\sigma$ in $T$ with vertex $x_{i+1}$, let $\hat \sigma=\sigma\setminus \{x_{i+1}\}$.  There is at most one vertex $y_{\sigma} \neq x_{i+1}$ such that $\hat \sigma\cup \{y_{\sigma} \}\in T$.  We say $\sigma$ is \emph{flippable} if $y_{\sigma} $ 
is inside the circumsphere of $\sigma$.  While there exists a flippable simplex $\sigma$ in $T$, we update $T$ by doing a flip on the restriction
of $T$ to $\sigma\cup\{x_{i+1},y_{\sigma} \}$.  When there are no more flippable 
simplices, we have $T=\Del(X_{i+1})$ and the computation is complete.  %

\begin{figure}[h]
  \centering
 \includegraphics[page=1, scale=0.6]{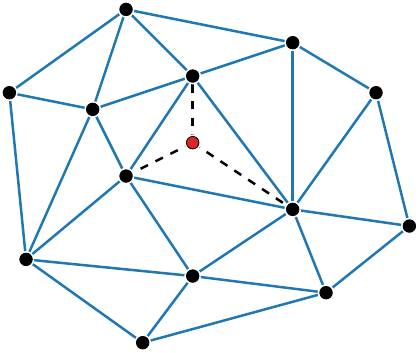}
 \hspace{1.5em}
  \includegraphics[page=2, scale=0.6]{figures/flips.pdf}
    \hspace{1.5em}
  \includegraphics[page=3, scale=0.6]{figures/flips.pdf}
  \caption{\small Incremental flipping. Left: we insert a new point $x_{i+1}$ (red) into $\Del(X_i)$ (blue); the
    dashed edges are the result of the initial subdivision step.  Center: the highlighted
    triangle is in conflict with $x_{i+1}$ and needs to be flipped; the dashed edge is inserted and the solid edge it crosses is removed. Right: we continue flipping the rest of triangles that are
    in conflict.}\label{fig:flips}
\end{figure}

\begin{proof}[Proof of \cref{eq:conflict_pairs_bound}]
It is easily checked that for each flippable simplex $\sigma$ encountered by the incremental flipping algorithm, we have $\hat\sigma \cup \{y_{\sigma}\}\in \Del(X_i)$ and therefore, $(\hat\sigma \cup \{y_{\sigma}\},x_{i+1})$ is a conflict pair.  Conversely, if a simplex $(\tau,x_{i+1})$ is a conflict pair, then $\tau\cup\{x_{i+1}\}$ must be involved in a flip.  Thus, as no flip is done more than once~\cite[Sec 8]{edelsbrunner1996incremental}, the total number of flips done by the algorithm equals the total number of conflict pairs. %
The number of flips is shown in~\cite[Sec 9]{edelsbrunner1996incremental} to be $O(|X|^{\lceil(d+1)/2\rceil})$ in the worst case, by counting simplices in dimension $\lfloor\frac{d}{2}\rfloor$.
\end{proof}

\subparagraph{Cost of computing $\Inc(X)$.}
Computing $\Inc(X)$ via Bowyer-Watson requires time $O(|X|^{\lceil d/2\rceil+1})$.  To explain, the cost is dominated by the cost of finding all conflicts.  Once we have located an $i$-conflict, the cost of finding the remaining $i$-conflicts is linear in the number of $i$-conflicts.  Thus, since $\Inc(X)$ has one $(d+1)$-simplex for each conflict, the cost of finding all conflicts admits an output-sensitive bound of $O(m+n)$, where $m=O(|X|^{\lceil (d+1)/2\rceil})$ is the size of $\Inc(X)$ and $n$ is the total cost of locating an $i$-conflict for each $i$.  Using the naive strategy of iterating through the simplices of $\Del(X_i)$, locating an $i$-conflict requires time $O(n_i)=O(|X_i|^{\lceil d/2\rceil})$, where $n_i$ is the number of $d$-simplices in $\Del(X_i)$.  Hence $n=O(\sum_i n_i)=O(|X|^{\lceil d/2\rceil+1})$  and $m+n$ satisfies the same bound.

We are not aware of any conflict location strategy in the literature that
achieves a better bound on $n$.  The key difficulty is that to compute
$\Inc(X)$, the order of point insertion must be compatible with $\gamma$.  In
contrast, we can compute $\Del(X)$ by inserting points in arbitrary order.  A
standard analysis shows when the points are inserted in random order,
Bowyer-Watson computes $\Del(X)$ in expected time $O(|X|^{\lceil d/2\rceil}+|X|\log |X|)$; the randomness lowers both the cost of point location and the number of conflicts, in expectation.  But we cannot use a random insertion order to compute $\Inc(X)$.

\subparagraph{The lifting perspective.}
Define $f\colon \R^d\to \R^{d+1}$ by $f(x)=(x, \|x\|^2)$.  It is well known that $\Del(X)$ is the image of the lower hull of $f(X)$ under the projection $\R^{d+1}\to \R^d$ that drops the last coordinate.  Thus, this lower hull, which we call the \emph{lift of $\Del(X)$}, is an embedding of $\Del(X)$ in $\R^{d+1}$.  

This embedding extends to an embedding of $\Inc(X)$ whose lower envelope is the lift of $\Del(X)$; see~\cref{fig:triangulation_lifts}. %
It is helpful to think about this embedding in terms of the lifts of consecutive pairs of Delaunay triangulations $\Del(X_i)$, $\Del(X_{i+1})$.  If $x_{i+1}$ lies in the convex hull of $X_i$, then the lift of $\Del(X_{i+1})$ partially coincides with the lift of $\Del(X_i)$ 
and otherwise lies below it in the $x_{d+1}$-direction.  The two lifts enclose a void in $\bR^{d+1}$. The upper hull of this void consists of all $d$-simplices
in $\Del(X_i)$ that do not belong to $\Del(X_{i+1})$;  the lower hull consists of the $d$-simplices in $\Del(X_{i+1})$ containing
$x_{i+1}$. Noting that $f(x_{i+1})$ lies below the hyperplane of each $d$-simplex in the upper hull of the void, one can check that the $(d+1)$-simplices of $\Inc(X_{i+1})\setminus \Inc(X_i)$ triangulate the void.

\begin{figure}[h]
  \centering
  \begin{tikzpicture}[scale=0.6]

\draw[gray, thick, domain=-4.2:4.2,opacity=.3] plot (\x, {.3*pow(\x,2});

\draw[gray,thick,< ->,,opacity=.3] (-4.5, 0) -- (4.5, 0);
\draw[black,very thick] (-3, 0) -- (4, 0);

\draw[black, fill=black] (4, 0) circle (2.5pt) node[below=4pt] {\scriptsize$x_2$};
\draw[black, fill=black] (-3, 0) circle (2.5pt) node[below=4pt] {\scriptsize$x_1$};
\draw[black, fill=black] (2, 0) circle (2.5pt) node[below=4pt] {\scriptsize$x_4$};
\draw[black, fill=black] (-1, 0) circle (2.5pt) node[below=4pt] {\scriptsize$x_3$};
\draw[black, fill=black] (1.25, 0) circle (2.5pt) node[below=4pt] {\scriptsize$x_5$};

\draw[blue, thick, fill=blue!30!white] (-3, 2.7) -- (4,4.8) -- (-1,.3)--cycle;
\draw[blue, thick, fill=blue!30!white] (-1,.3) -- (2,1.2) -- (4,4.8)--cycle;
\draw[blue, thick, fill=blue!30!white] (-1,.3) -- (1.25,.46875) -- (2,1.2) --cycle;
\draw[black,very thick] (-3, 2.7) -- (-1,.3) -- (1.25,.46875) -- (2,1.2) -- (4,4.8);

\draw[black, fill=black](4, 4.8) circle (2.5pt);
\draw[black, fill=black] (-3, 2.7) circle (2.5pt);
\draw[black, fill=black] (2, 1.2) circle (2.5pt);
\draw[black, fill=black](-1, .3) circle (2.5pt);
\draw[white, fill=white](-1.16, .26) circle (1.5pt);
\draw[black, fill=black](1.25,.46875) circle (2.5pt);

\end{tikzpicture}
    \caption{\small The Delaunay triangulation $\Del(X)$ of a set of five points $X\subset\R$ and its parabolic lift are drawn in black.  $\Inc(X)$ is drawn in blue for the order of $X$ shown. \label{fig:triangulation_lifts}}
\end{figure}
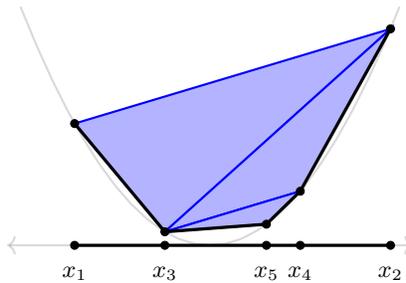

\section{Sublevel Bifiltrations}\label{sec:bifiltrations}

\subparagraph{Simplicial bifiltrations of point clouds.}
As in the introduction, let $A=[0,\infty)\times \R$, and define the sublevel Delaunay-\v{C}ech and sublevel Delaunay bifiltrations of $\gamma\colon X\to \R$, denoted $\DelCech_{\filt}(\gamma)$ and $\Del_{\filt}(\gamma)$, to be the $A$-indexed bifiltrations given by 
\begin{align*}
\DelCech_{r,s}(\gamma)&=\{\sigma \in \Inc(X) \mid \meb{\sigma}\leq r,\ \gamma(\max(\sigma))\leq s\},\\
\Del_{r,s}(\gamma)&=\{\sigma \in \Inc(X) \mid \radius{\sigma}{\gamma} \leq r,\ \gamma(\max(\sigma))\leq s\}.
\end{align*}
where $\meb{\sigma}$ denotes the radius of the minimum enclosing ball of $\sigma$ and  $\radius{\sigma}{\gamma}$ is the incremental Delaunay radius of \cref{def:incremental_delaunay}.

To establish the topological equivalence of these bifiltrations and the sublevel offset bifiltration $\offset_{\filt}(\gamma)$ defined in the introduction, we also consider the sublevel \v{C}ech bifiltration $\mathcal C_{\filt}(\gamma)$.  Recall that \v{C}ech filtration $\mathcal C_{\filt}(X)$ is the $[0,\infty)$-indexed filtration defined by
\[\mathcal C_r(X)= \Set{\sigma \subset X \given \bigcap_{x\in \sigma} B_{r}(x) \neq\varnothing},\]
where $B_{r}(x)$ denotes the closed ball of radius $r$ centered at $x$.  We then define $\mathcal C_{\filt}(\gamma)$ to be the $A$-indexed bifiltration given by $\mathcal C_{r,s}(\gamma)=\mathcal C_r(\gamma^{-1}(-\infty,s])$.  In analogy to the 1-parameter setting \cite{bauerMorseTheoryCech2016}, these bifiltrations are related by containment as follows:
\[\Del_{\filt}(\gamma)\subset \DelCech_{\filt}(\gamma)\subset \Cech_{\filt}(\gamma).\]

\subparagraph{Topological equivalence of bifiltrations.}
We next consider a notion of topological equivalence of (bi)filtrations which is
standard in classical algebraic topology and also naturally suited to TDA
\cite{corbet2023computing,blumberg2017universality,bauer2023unified,lanari2023rectification}.
We will write the definition only for $A$-indexed bifiltrations, but the definitions extend to 1-parameter filtrations
and more generally, to diagrams of topological spaces indexed by an arbitrary
partially ordered set.
Given $A$-indexed bifiltrations of topological spaces $F_{\filt}$ and $G_{\filt}$, a \emph{natural transformation} $\alpha\colon F_{\filt}\to G_{\filt}$ is a collection of
continuous maps $(\alpha_{a}\colon F_{a}\to G_{a})_{a\in A}$ such that the following diagram commutes for every $a\leq a'$:
\begin{equation*}
  \begin{tikzcd}
    F_a \arrow[r, hook]\dar[swap]{\alpha_a} & F_{a'}\dar{\alpha_{a'}}\\
    G_a  \arrow[r, hook] & G_{a'}\nospaceperiod
  \end{tikzcd}
\end{equation*}
We say that $\alpha$ is a \deff{pointwise homotopy equivalence} if each $\alpha_{a}$ is a homotopy
equivalence.  Existence of a pointwise homotopy equivalence $F_{\filt}\to G_{\filt}$ is not an equivalence relation on bifiltrations, %
but it generates one:

\begin{definition}\label{Def:Topological_Equivalence}
$F_{\filt}$ and $G_{\filt}$ are said to be \deff{topologically equivalent} (or \deff{weakly equivalent}) if they
are connected by a zigzag of pointwise homotopy equivalences:
\[
\begin{tikzcd}[ampersand replacement=\&,column sep=2ex,row sep=2ex]
   \& C^1_{\filt}\ar[swap]{dl}\ar{dr}  \&           \& \cdots\ar[swap]{dl}\ar[]{dr}  \&                \&   C^n_{\filt}\ar[swap]{dl}\ar[]{dr}  \\
F_{\filt} \&                                                             \&  C^2_{\filt} \&                                                                                                                       \& C^{n-1}_{\filt} \&                                                               \&G_{\filt}.
\end{tikzcd}
\]
\end{definition}

The following is an immediate consequence of the
(functorial) Nerve Theorem~\cite[Thm 3.9]{bauer2023unified}:

\begin{proposition}
$\mathcal C_{\filt}(\gamma)$ and $\offset_{\filt}(\gamma)$ are topologically equivalent.
\end{proposition}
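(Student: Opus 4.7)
The plan is to reduce the statement directly to the functorial nerve theorem cited from \cite{bauer2023unified}. What is needed is to package the balls appearing in the definitions of $\offset_{\filt}(\gamma)$ and $\mathcal C_{\filt}(\gamma)$ into a diagram of good covers indexed by $A$, and then to invoke the theorem as a black box.

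\emph{Step 1: Set up the cover diagram.} For each $(r,s) \in A$, I would take the cover $\mathcal U_{r,s} = \{B_r(x) \mid x \in X^s\}$ of $\offset_{r,s}(\gamma) = \bigcup_{x \in X^s} B_r(x)$. Since $X^s \subset X^{s'}$ whenever $s \leq s'$ and $B_r(x) \subset B_{r'}(x)$ whenever $r \leq r'$, there is, for every $(r,s) \leq (r',s')$, an inclusion-compatible map of index sets $X^s \hookrightarrow X^{s'}$ together with inclusions $B_r(x) \hookrightarrow B_{r'}(x)$; this makes $\mathcal U_{\filt}$ a functorial cover of $\offset_{\filt}(\gamma)$ in the sense required by \cite{bauer2023unified}.

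\emph{Step 2: Verify the goodness hypothesis.} For any non-empty $\sigma \subset X^s$, the intersection $\bigcap_{x \in \sigma} B_r(x)$ is an intersection of convex sets in $\R^d$, hence either empty or convex and thus contractible. This shows each cover $\mathcal U_{r,s}$ is a good cover, and by definition its nerve is exactly $\mathcal C_{r,s}(\gamma)$. Functoriality of the nerve construction identifies the induced maps on nerves with the simplicial inclusions of the bifiltration $\mathcal C_{\filt}(\gamma)$.

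\emph{Step 3: Apply the functorial nerve theorem.} Invoking \cite[Thm 3.9]{bauer2023unified} on the $A$-indexed good cover $\mathcal U_{\filt}$ produces a zigzag of natural transformations between $\offset_{\filt}(\gamma)$ and (the geometric realization of) $\mathcal C_{\filt}(\gamma)$, each of which is a pointwise homotopy equivalence. By \cref{Def:Topological_Equivalence}, this is exactly the assertion that $\mathcal C_{\filt}(\gamma)$ and $\offset_{\filt}(\gamma)$ are topologically equivalent.

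There is essentially no obstacle here: the only content beyond citing the theorem is checking that the family of covers $\mathcal U_{\filt}$ fits into the framework of \cite{bauer2023unified}, and convexity of Euclidean balls takes care of the goodness condition. Nothing in this argument uses the structure of $\gamma$ beyond the fact that $s \mapsto X^s$ is monotone, so the proof is a direct transcription of the classical argument for $\Cech_{\filt}(X) \simeq \offset_{\filt}(X)$ to the bifiltered setting.
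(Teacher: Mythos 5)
Your proof is correct and takes the same approach as the paper, which treats the statement as an immediate consequence of the functorial nerve theorem of \cite{bauer2023unified} without writing out the details. You have simply supplied the routine verification (functoriality of the cover diagram and goodness via convexity of balls) that the paper leaves implicit.
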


To compare the topology of
$\DelCech_{\filt}(\gamma)$, $\Del_{\filt}(\gamma)$, and
$\mathcal C_{\filt}(\gamma)$, we work with \textit{collapses}, a
combinatorial notion of deformation retraction first introduced by
Whitehead~\cite{whiteheadCollapses,edelsbrunnerComputationalTopologyIntroduction2010}.
In a simplicial complex $K$, a simplex $\sigma$ is a
\deff{free face} of a simplex $\tau$ if $\tau$ is the only simplex that properly
contains $\sigma$. There is an \deff{elementary collapse} from $K$ to a
subcomplex $L$ if $K\setminus L$ is a pair of simplices
$\{\sigma,\tau\}$ such that $\sigma$ is a free face of $\tau$. We say that $K$
\deff{collapses} to $L$ and write $K\collapse L$ if there is a sequence of
elementary collapses from $K$ to $L$.  If $K\collapse L$, then
the inclusion $L \hookrightarrow K$ is a homotopy equivalence.

In~\cref{sec:proof_of_collapsing_theorem}, we prove the following theorem:
\begin{theorem}\label{thm:collapsing}
  For any $a\in [0,\infty)\times \R$,
  \begin{equation*}
    \Cech_{a}(\gamma) \collapse \Del_{a}(\gamma) \text{ and } \DelCech_{a}(\gamma) \collapse \Del_{a}(\gamma).
  \end{equation*}
\end{theorem}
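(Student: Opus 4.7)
The plan is to fix $a=(r,s)\in A$ and construct an acyclic discrete Morse matching on the simplices of $\Cech_a(\gamma)\setminus \Del_a(\gamma)$ whose critical (unmatched) cells are exactly those of $\Del_a(\gamma)$, and to verify that the same matching restricts to $\DelCech_a(\gamma)\setminus \Del_a(\gamma)$. By the fundamental theorem of discrete Morse theory, this yields both claimed collapses. I will follow the blueprint of Bauer and Edelsbrunner~\cite{bauerMorseTheoryCech2016} via selective Delaunay complexes, adapted to the asymmetric, bifiltered setting.

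\textbf{Key steps.}
First I would reformulate membership in $\Del_a(\gamma)$: for $\sigma\in \Cech_a(\gamma)$ with $\max(\sigma)=x_k$, one has $\sigma\in\Del_a(\gamma)$ iff there exists a closed ball $B\subset \R^d$ of radius at most $r$ with $\sigma\setminus\{x_k\}\subset\partial B$, $x_k\in B$, and $B^\circ\cap X_{k-1}\subseteq \sigma$. This is a variant of the selective Delaunay condition whose ``obstruction set'' $X_{k-1}$ depends on $\sigma$, and in which $\max(\sigma)$ is allowed strictly inside $B$. For each $\sigma\notin \Del_a(\gamma)$, I would canonically identify an obstructing point $p(\sigma)\in X_{k-1}\setminus \sigma$: the point appearing on the boundary of a minimal witness ball, obtained by continuously varying candidate balls as in~\cite{bauerMorseTheoryCech2016}. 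I then pair $\sigma$ with $\sigma\triangle\{p(\sigma)\}$ and verify in order: (i) involutivity, $p(\sigma\triangle\{p(\sigma)\})=p(\sigma)$, which reduces to showing that $p$ depends only on the ``support'' $\sigma\cup\{p(\sigma)\}$; (ii) that both partners lie in $\Cech_a(\gamma)\setminus \Del_a(\gamma)$, using that $p(\sigma)\in X_{k-1}\subset X^s$ preserves the sublevel condition, that toggling $p(\sigma)$ preserves the meb-radius bound, and that a hypothetical $\Del_a$-witness for $\sigma\cup\{p(\sigma)\}$ would also witness $\sigma\in\Del_a(\gamma)$; and (iii) acyclicity, via a lexicographic order on the radius of the minimal witness ball and the index of $p(\sigma)$ that strictly decreases along any alternating path. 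For the second collapse, since the matching only toggles points of $X_{k-1}$, it preserves $\max(\sigma)$ and does not change the meb-radius range, so $\sigma\in \DelCech_a(\gamma)$ iff its partner is, and the matching restricts as required.

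\textbf{Main obstacle.}
The chief technical difficulty is correctly handling the asymmetric role of $\max(\sigma)$: both the obstruction set $X_{k-1}$ and the fact that $\max(\sigma)$ may lie strictly inside the witness ball depend on $\sigma$ itself, and neither phenomenon is present in the one-parameter setting of~\cite{bauerMorseTheoryCech2016}. Constraining $p(\sigma)$ to $X_{k-1}\setminus \sigma$ (rather than $X^s\setminus \sigma$) is essential so that the matching preserves $\max(\sigma)$ and keeps partners in the same ``layer'' of $\Inc(X)$. In particular, this guarantees that the ``conflict pair'' $(d+1)$-simplices in $\Del_a(\gamma)\setminus \Del_r(X^s)$ and their faces not in $\Del_r(X^s)$ remain critical rather than being spuriously matched with simplices of a different maximum vertex, which is the subtlety that distinguishes the bifiltered case from the one-parameter collapse.
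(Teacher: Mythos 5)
Your overall blueprint---discrete Morse theory, the reformulation of $\Del_{a}(\gamma)$ membership via the obstruction set $X_{k-1}$ with $k$ determined by $\max(\sigma)$, and pairing by toggling a single vertex from $X_{k-1}$---is the same as the paper's, and you correctly identify the central subtlety (the obstruction set and the witness condition depend on $\max(\sigma)$, so the toggled vertex must lie in $X_{k-1}$). However, you deviate structurally in a way that opens a gap: you propose a single global matching $\sigma\mapsto\sigma\triangle\{p(\sigma)\}$ on $\Cech_{a}(\gamma)\setminus\Del_{a}(\gamma)$, prove acyclicity directly, and then claim the \emph{same} matching restricts to $\DelCech_{a}(\gamma)\setminus\Del_{a}(\gamma)$. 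The paper instead decomposes each collapse along a filtration of intermediate subcomplexes---$\mathcal{U}_i=\Del_r(X_j,X_i)\cup\Del_a(\gamma)$ for $\Cech_a(\gamma)\collapse\Del_a(\gamma)$, and $M_i$ (filtered by $\max(\sigma)$) for $\DelCech_a(\gamma)\collapse\Del_a(\gamma)$---and these two filtrations use two \emph{different} gradients; the first toggles the fixed vertex $x_i$ at step $i$ (so acyclicity is automatic by \cref{lem:vertex_gradient}, no lexicographic argument needed), while the second borrows the Bauer--Edelsbrunner gradient $V_i$ on $\Cech_r(X_i)\cap\SelDel(X_i,X_{i-1})$ and restricts it to the $\max$-layer $X_{=i}$.

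The concrete gap is in your item (ii) / your final paragraph. To conclude that the partner of $\sigma\in\DelCech_a(\gamma)\setminus\Del_a(\gamma)$ stays in $\DelCech_a(\gamma)$, you argue that toggling $p(\sigma)\in X_{k-1}$ preserves $\max(\sigma)$ and the meb-radius bound. But $\DelCech_a(\gamma)$ is defined as a subset of $\Inc(X)$: you must also show that $\sigma\cup\{p(\sigma)\}\in\Inc(X)$, i.e.\ that the augmented simplex still admits a circumsphere of its sub-maximal part with $x_k$ on or inside and $X_{k-1}$ empty. Toggling an arbitrary point of $X_{k-1}$ does not in general preserve this; the obstruction point chosen for the $\Cech$-to-$\Del$ collapse is tied to a radius-$\leq r$ ball and has no reason to lie on a sphere witnessing $\Inc(X)$-membership. (Your meb claim ``does not change the meb-radius range'' also needs an argument for the direction $r_{\sigma}\leq r\Rightarrow r_{\sigma\cup\{p(\sigma)\}}\leq r$, since adding a point can only increase the meb.) The paper sidesteps both issues precisely by not reusing the $\Cech$-side matching: the gradient $V_i$ of \cref{thm:selective_delcech_to_del} lives on $\Cech_r(X_i)\cap\SelDel(X_i,X_{i-1})$, whose $X_{=i}$ layer \emph{is} the corresponding layer of $\Inc(X)$, so partners automatically stay in $\Inc(X)$, and the paired simplices have the same $\Cech_r(X_i)$-membership by construction. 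To repair your argument you would either need to prove that your canonical $p(\sigma)$ also lies on the sphere of minimal radius witnessing $\sigma\in\SelDel(X_k,X_{k-1})$ whenever $\sigma\in\Inc(X)$, or---more in line with the paper---use a separate matching for the $\DelCech$ collapse.
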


As an immediate corollary, we obtain the main topological result of this paper:

  \begin{corollary}\label{thm:incremental_equiv}
    The inclusions of
    filtrations \[\Del_{\filt}(\gamma) \hookrightarrow  \DelCech_{\filt}(\gamma) \text{
      and }\Del_{\filt}(\gamma)\hookrightarrow\Cech_{\filt}(\gamma)\]
    are both pointwise homotopy equivalences.  Hence
    $\Del_{\filt}(\gamma)$, $\DelCech_{\filt}(\gamma)$, $\Cech_{\filt}(\gamma)$, and $\offset_{\filt}(\gamma)$ are all are topologically
    equivalent.
  \end{corollary}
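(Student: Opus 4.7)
The plan is to deduce the corollary as an essentially formal consequence of \cref{thm:collapsing} together with the functorial nerve theorem already invoked for $\Cech_{\filt}(\gamma)$ and $\offset_{\filt}(\gamma)$. I expect no serious obstacles; all the geometric and combinatorial content sits in \cref{thm:collapsing}, and the corollary is a matter of correctly packaging that content.

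First, I would use the standard fact (noted just before \cref{thm:collapsing}) that whenever $K\collapse L$, the inclusion $L\hookrightarrow K$ is a homotopy equivalence. Applying this at each bifiltration index $a=(r,s)\in A$, \cref{thm:collapsing} immediately yields that for every $a$, both inclusions $\Del_a(\gamma)\hookrightarrow\DelCech_a(\gamma)$ and $\Del_a(\gamma)\hookrightarrow\Cech_a(\gamma)$ are homotopy equivalences. The containments $\Del_{\filt}(\gamma)\subset\DelCech_{\filt}(\gamma)\subset\Cech_{\filt}(\gamma)$ stated earlier in \cref{sec:bifiltrations} ensure that these pointwise inclusions assemble into genuine natural transformations of bifiltrations: the internal structure maps of all three bifiltrations are themselves subcomplex inclusions, so the squares
\begin{equation*}
\begin{tikzcd}
\Del_a(\gamma)\arrow[r,hook]\arrow[d,hook] & \Del_{a'}(\gamma)\arrow[d,hook]\\
\DelCech_a(\gamma)\arrow[r,hook] & \DelCech_{a'}(\gamma)
\end{tikzcd}
\end{equation*}
(and the analogous ones with $\Cech$ in place of $\DelCech$) commute tautologically for every $a\leq a'$. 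This establishes the first assertion of the corollary.

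Second, to obtain topological equivalence of all four bifiltrations in the sense of \cref{Def:Topological_Equivalence}, I would concatenate the zigzag furnished by the functorial nerve theorem (the Proposition preceding \cref{thm:collapsing}, which gives a zigzag of pointwise homotopy equivalences $\offset_{\filt}(\gamma)\sim\Cech_{\filt}(\gamma)$) with the two new pointwise homotopy equivalences obtained above. Concretely, the zigzag
\begin{equation*}
\offset_{\filt}(\gamma)\;\sim\;\Cech_{\filt}(\gamma)\;\hookleftarrow\;\Del_{\filt}(\gamma)\;\hookrightarrow\;\DelCech_{\filt}(\gamma)
\end{equation*}
connects any two of the four bifiltrations through pointwise homotopy equivalences, which is precisely the condition of \cref{Def:Topological_Equivalence}. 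Since the relation of being connected by such a zigzag is symmetric and transitive by definition, all four bifiltrations are topologically equivalent, completing the proof.
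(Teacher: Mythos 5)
Your proposal is correct and is precisely the argument the paper leaves implicit when it calls this an ``immediate corollary'' of \cref{thm:collapsing}: pointwise collapses give pointwise homotopy equivalences, the subcomplex containments make the inclusions natural, and concatenating with the nerve-theorem zigzag yields topological equivalence of all four bifiltrations. Nothing to add.
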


\section{Implementation and experiments}
\label{sec:implementation}
We have written a C\texttt{++} program to compute the sublevel Delaunay-\v Cech bifiltration $\DelCech_{\filt}(\gamma)$ of a function $\gamma\colon X\to \R$, where $X\subset \R^d$ and $d\geq 2$.
It reads a text file containing $(d+1)$ float values per line; each line specifies a point in $x\in X$, with $\gamma(x)$ as the last value.
The program computes a graded chain complex representing $\DelCech_{\filt}(\gamma)$
in the \textsc{Scc2020} file format~\cite{scc2020}.
When $d=2$ or $d=3$, the program uses \textsc{CGAL}'s specialized Delaunay triangulations for these dimensions \cite{cgal_dt_2d,cgal_dt_3d}; when $d>3$, it uses CGAL's implementation of Delaunay triangulations for arbitrary dimensions~\cite{cgal_dt_dd}.

After augmenting $X$ to enforce the $\Delta$-property as in \cref{rem:Delta_Prop}, our program computes all $(d+1)$-simplices of the incremental Delaunay complex using the variant of the Bowyer-Watson algorithm described in \cref{sec:incr_del_cmplx}.  These simplices are inserted into a \emph{simplex tree}~\cite{simplex_tree_paper}, an efficient data structure for simplicial complexes available in \textsc{Gudhi}~\cite{gudhi:FilteredComplexes}.
The meb radii of all simplices are then computed using \textsc{CGAL}~\cite{cgal_meb} (for $d\leq 10$)
or the software package \textsc{Miniball} by G{\"a}rtner\footnote{\url{https://people.inf.ethz.ch/gaertner/subdir/software/miniball.html}} (for $d>10$).  For efficiency, we compute the meb radii in order of decreasing simplex dimension, exploiting the fact that if the meb of $\sigma$ contains $\tau\subset \sigma$ in its interior, then the same ball is also the meb of $\sigma\setminus \tau$.
Finally, the simplices are ordered with respect to their meb radius, the function value of each simplex is determined, and the output file is written.

Thanks to its use of the aforementioned software packages, our program consists of only around 1500 lines of code.
We offer it in a public repository\footnote{\url{https://bitbucket.org/mkerber/function_delaunay}} published under the GNU Public License (GPL).

All experiments were performed on a computer with an
Intel Core i7-5960X CPU @ 3.00GHz and
 64GB of memory,
running Ubuntu 20.04.6 LTS.
The program was compiled with \texttt{g++ 9.4.0}.

\subparagraph{Datasets.}
We have generated various point clouds by sampling $1$-spheres $S^1$ and unit squares $[0, 1]^2$ in $\R^2$, and $2$-spheres $S^2$, tori $S^1\times S^1$ and unit cubes $[0, 1]^3$ in $\R^3$.
For each point cloud, we add 5\% of noise, sampled from the uniform distribution.
In addition, a small perturbation (5\%) is introduced to each point cloud to ensure that point clouds are in general position.
For each of these five data types, we considered samples of $500$, $1000$, $2000$, $4000$, $8000$, $16000$ and $32000$ points.
For fixed type and number of points, we generated five independent samples, resulting in $125$ different point clouds.

For each point cloud, we tested the following four choices of functions %
\begin{itemize}
    \item \emph{codensity function} $\gamma(p) = -\sum\limits_{q\neq p}{\exp{\left(-\frac{\|p-q\|^2)}{\sigma^2}\right)}}$, with $\sigma$ chosen as $0.1^{\mathrm{st}}$ percentile of the non-zero distances between points in $X$.
    \item \emph{$L_1$-coeccentricity} $\rm{ecc}_1 (p) = -\sum\limits_{q\in X}{\frac{\|p-q\|}{|X|}}$, where $|X|$ denotes the size of $X$.
    \item \emph{height} $h(p) = p_d$, where $p_d$ is the last coordinate of $p\in \R^d$
    \item \emph{random}, where each points gets an independent, random value in $[0,1]$.
\end{itemize}
In total, our test suite contains a total of $500$ point clouds with functions. For brevity, in what follows we only present a subset of representative results.

\subparagraph{Complex size.}
To get a sense of how large incremental Delaunay triangulations are in practice, we experimentally compare the sizes of incremental Delaunay complexes and Delaunay triangulations.   Table~\ref{table:size} gives such a comparison
for points on the circle in $\R^2$ and on the torus in $\R^3$.
There are two major observations: first, for randomly
assigned function values, the ratio of the sizes remains (nearly) constant. That is expected, as this case corresponds
to the randomized incremental construction (RIC) paradigm, and it is known that the size of the RIC is small in expectation.
The second observation is that the non-random functions lead to bigger incremental Delaunay complex, and moreover, the ratio of the sizes is growing as the number of points grows.
While we cannot offer a theoretical explanation of this behavior, we suspect that the insertion order
of the points is particularly disadvantageous for the incremental construction: for instance, in the height function,
every newly inserted point is outside of the convex hull of the previous points which is not typical for RIC.
Nevertheless, we observe that the ratio grows slowly and is rather small, making the computation of the
incremental Delaunay complex still feasible for large point clouds in small dimension.

\begin{table}[h]
  \resizebox{\textwidth}{!}{
  \begin{tabular}{ccc||cc| cc | cc | cc }
    \hline
    \multirow{2}{*}{Sample} & \multirow{2}{*}{\#Points} & \multirow{1}{*}{Delaunay}& \multicolumn{2}{c}{Random} & \multicolumn{2}{c}{Density} & \multicolumn{2}{c}{Height} & \multicolumn{2}{c}{Eccentricity} \\
    & & size & Size & Ratio & Size & Ratio & Size & Ratio & Size & Ratio \\

\hline
\multirow{6}{*}{$S^1$}              & 1000 & 5975 & 19481 & 3.26 & 23611 & 3.95 & 25575 & 4.28 & 39419 & 6.6 \\
                                    & 2000 & 11971 & 39131 & 3.27 & 49593 & 4.14 & 57019 & 4.76 & 83209 & 6.95 \\
                                    & 4000 & 23959 & 79839 & 3.33 & 102307 & 4.27 & 127111 & 5.31 & 180307 & 7.53 \\
                                    & 8000 & 47967 & 158789 & 3.31 & 209099 & 4.36 & 282055 & 5.88 & 375891 & 7.84 \\
                                    & 16000 & 95953 & 318919 & 3.32 & 430991 & 4.49 & 619561 & 6.46 & 788465 & 8.22 \\
                                    & 32000 & 191943 & 638553 & 3.33 & 889599 & 4.63 & 1343567 & 7.0 & 1638579 & 8.54 \\

\hline
\multirow{6}{*}{$S^1 \times S^1$}   & 1000 & 27003 & 122181 & 4.52 & 171105 & 6.34 & 230075 & 8.52 & 215815 & 7.99 \\
                                    & 2000 & 52717 & 257129 & 4.88 & 361151 & 6.85 & 515437 & 9.78 & 467803 & 8.87 \\
                                    & 4000 & 105249 & 512521 & 4.87 & 747617 & 7.1 & 1159641 & 11.02 & 937369 & 8.91 \\
                                    & 8000 & 211885 & 1010347 & 4.77 & 1526509 & 7.2 & 2561545 & 12.09 & 1911169 & 9.02 \\
                                    & 16000 & 426531 & 2028761 & 4.76 & 3093869 & 7.25 & 5578187 & 13.08 & 3912707 & 9.17 \\
                                    & 32000 & 862789 & 4065635 & 4.71 & 6376129 & 7.39 & 12145149 & 14.08 & 8118607 & 9.41 \\
\hline

\end{tabular}}
\centering
  \caption{Size comparison of incremental Delaunay complexes and Delaunay triangulations.}
  \label{table:size}
\end{table}

\subparagraph{Computation time.}
We report on the time and memory to compute the sublevel Delaunay bifiltration (in chain complex format)
in Table~\ref{table:time_overview}. Time for file IO is not reported.
We observe a slightly super-linear performance for all function choices. That is expected from the previous
experiments for the non-random functions since the size of the output complex grows super-linear as well, but remarkably,
also applies to random function values, although the growth seems slightly smaller.
We also normalized the runtime by dividing it by the complex size, and we observe that the time spent
per simplex tends to grow for larger instances. Nevertheless, despite an apparent super-linear behavior
with respect to the output size, the program can compute bifiltrations for tens of thousands of points
in $\R^3$ within less than a minute. Also, the memory consumption is close to linear.

\begin{table}[h]
  \resizebox{\textwidth}{!}{
  \begin{tabular}{cc||ccc|ccc|ccc|ccc}
    \hline
    \multirow{3}{*}{Sample} & \multirow{3}{*}{\#Points} & \multicolumn{3}{c}{Random} & \multicolumn{3}{c}{Density} & \multicolumn{3}{c}{Height} & \multicolumn{3}{c}{Eccentricity} \\
    & & Time &  Time/simplex  & Memory & Time & Time/simplex & Memory & Time & Time/simplex & Memory & Time & Time/simplex & Memory \\
    & & (s) &   (ms) & (GB) & (s) & (ms) & (MB) & (s) & (ms) & (MB) & (s) & (ms) & (MB) \\
\hline
\multirow{6}{*}{$S^1$}              &  1000 & 0.03 & 1.58 & 8.44 & 0.05 & 2.01 & 9.4 & 0.04 & 1.5 & 9.72 & 0.05 & 1.35 & 13.12 \\
                                    &  2000 & 0.06 & 1.62 & 12.87 & 0.08 & 1.54 & 15.08 & 0.08 & 1.33 & 17.14 & 0.14 & 1.67 & 22.79 \\
                                    &  4000 & 0.13 & 1.69 & 21.88 & 0.16 & 1.58 & 26.48 & 0.17 & 1.36 & 32.86 & 0.3 & 1.66 & 44.19 \\
                                    &  8000 & 0.28 & 1.78 & 39.56 & 0.35 & 1.68 & 50.13 & 0.44 & 1.55 & 68.38 & 0.64 & 1.71 & 87.75 \\
                                    &  16000 & 0.64 & 2.02 & 74.72 & 0.81 & 1.87 & 98.46 & 1.03 & 1.66 & 146.57 & 1.44 & 1.82 & 178.85 \\
                                    &  32000 & 1.5 & 2.35 & 147.81 & 1.79 & 2.01 & 199.02 & 2.37 & 1.76 & 311.03 & 3.17 & 1.94 & 371.64 \\

\hline
\multirow{6}{*}{$S^1 \times S^1$}   &  1000 & 0.21 & 1.73 & 31.89 & 0.3 & 1.77 & 43.03 & 0.41 & 1.79 & 57.21 & 0.38 & 1.76 & 53.09 \\
                                    &  2000 & 0.49 & 1.92 & 62.76 & 0.75 & 2.06 & 85.69 & 1.05 & 2.03 & 125.26 & 0.95 & 2.04 & 110.82 \\
                                    &  4000 & 1.11 & 2.17 & 122.72 & 1.59 & 2.12 & 170.57 & 2.44 & 2.1 & 265.98 & 1.98 & 2.11 & 224.83 \\
                                    &  8000 & 2.36 & 2.34 & 236.03 & 3.49 & 2.29 & 354.77 & 5.54 & 2.16 & 619.1 & 4.17 & 2.18 & 437.08 \\
                                    &  16000 & 5.11 & 2.52 & 455.98 & 7.41 & 2.4 & 710.66 & 12.7 & 2.28 & 1271.56 & 8.94 & 2.28 & 910.05 \\
                                    &  32000 & 11.1 & 2.73 & 929.58 & 15.65 & 2.45 & 1461.12 & 28.5 & 2.35 & 2796.81 & 19.08 & 2.35 & 1847.01 \\

 \hline
\end{tabular}}
\centering
  \caption{Running time (in seconds), the average time to process a simplex (in millisecond), and the memory consumption (in MB) for computing the sublevel Delaunay bifiltration.}
  \label{table:time_overview}
\end{table}

\subparagraph{Comparison with sublevel Rips.}
We next compare the performance of homology computations using sublevel Delaunay-\v Cech and sublevel Rips bifiltrations.
Still fixing a point set $X$ and a function $\gamma:X\to\R$, the sublevel Rips bifiltration $\rips_{\filt}(\gamma)$ is defined by taking $\rips_{r,s}(\gamma)$ to consist of all cliques $S\subset X$ of diameter at most $r$ whose points have $\gamma$-value at most $s$.  As noted in the introduction, because the full sublevel Rips bifiltration $\rips_{\filt}(\gamma)$ is large, we usually consider its $k$-skeleton for $k$ small. Here we will take
$k=2$, which is sufficient for homology computations of degree at most one.
Still, the $2$-skeleton of $\rips_{\filt}(\gamma)$ has a cubic number of simplices in $|X|$, which is prohibitively large
for the data sizes we consider.

However, recent work has introduced way to compute a smaller but equivalent bifiltration: Note that for $r$ and $s$ large enough, $\rips_{r,s}(\gamma)$
is a clique complex
over a complete graph. Alonso, Kerber, and Pritam~\cite{alonso2023filtration}
preprocess the complete graph by removing edges to arrive at a smaller graph $G$, and show that the bifiltration $\mathcal G_{\filt}(\gamma)$ induced
by the clique complex of $G$ is equivalent to $\rips_{\filt}(\gamma)$.

Given a simplicial filtration $\mathcal Z$, the software \emph{mpfree}\footnote{\url{https://bitbucket.org/mkerber/mpfree/src/master/}}~\cite{lw-computing,kr-fast,fkr-compression}
efficiently computes a minimal presentation of the homology of $\mathcal Z$ in a fixed degree.

We use \emph{mpfree} and the code from~\cite{alonso2023filtration}\footnote{\url{https://github.com/aj-alonso/filtration_domination}}
to compute (a minimal presentation of) the homology of $\rips_{\filt}(\gamma)$ in degree 1.  We also use
\emph{mpfree} to compute the homology of $\Inc(\gamma)$ in degree 1.  These computations require us only to compute the 2-skeleta of $\mathcal G_{\filt}(\gamma)$ and $\Inc_{\filt}(\gamma)$.
We note that the results of the two computations are \emph{not} equivalent, because $\rips_{\filt}(\gamma)$
is not topologically equivalent to $\offset_{\filt}(\gamma)$.  However, as indicated in the introduction, the two bifiltrations approximate each other, and hence the same is true for their homology~\cite{blumberg2017universality}.

Table~\ref{table:rips_compare} shows the results of our comparison. We see that our approach improves
on the Rips variant in terms of complex size, time, and memory, usually by several orders of magnitude.
This is the case even though the approach of~\cite{alonso2023filtration}
substantially reduces the number of edges in the bifiltration and therefore vastly improves on a naive computational strategy \cite{alonso2023filtration}.

The software
\emph{2pac}\footnote{\url{https://gitlab.com/flenzen/2-parameter-persistent-cohomology}},
based on recent work of Bauer, Lenzen, and Lesnick \cite{bauer2023efficient}, is
an alternative to \emph{mpfree} which is optimized for clique bifiltrations.  In this
work, we have not run experiments with \emph{2pac}.  While it is possible that
using the code from~\cite{alonso2023filtration} together with \emph{2pac} would
lead to significant performance gains, for low-dimensional point clouds this
seems unlikely to make the performance of $\rips_{\filt}(\gamma)$ competitive with $\Inc_{\filt}(\gamma)$.
\begin{table}[h]
\centering
\resizebox{\textwidth}{!}{

\begin{tabular}{cc |ccccc|ccc}
\hline
\multirow{4}{*}{Data Set} & \multirow{4}{*}{\#Points} & \multicolumn{5}{c|}{\multirow{2}{*}{codensity-Rips bifiltration }} & \multicolumn{3}{c}{\multirow{3}{*}{codensity-Delaunay bifiltration}} \\

 &  & \multicolumn{5}{c|}{\multirow{2}{*}{preprocessing with filtration-domination }} & \multicolumn{3}{c}{} \\
  &  && & & & & & & \\
 \cline{3-10}
 &  & {\# Edges before } & {\# Edges after} & Complex Size & Time & Memory & Complex Size & Time & Memory \\

\hline
\hline
\multirow{2}{*}{$S^1$}              &  500 & 124750 & 15124 & 10293094 & 16.24 & 2573.28 & 11269 & 0.02 & 6.78 \\
                                    &  1000 & 499500 & 44219 & 86436503 & 137.04 & 21778.87 & 23611 & 0.06 & 9.82 \\
\multirow{2}{*}{$S^2$}              &  500 & 124750 & 50342 & 188414574 & 246.45 & 47241.98 & 67763 & 0.13 & 19.49 \\
                                    &  1000 & 499500 & 173725 & $\infty$ & $\infty$ & $\infty$ & 150747 & 0.32 & 38.17 \\
\multirow{2}{*}{$[0, 1]^2$}         &  500 & 124750 & 5922 & 136392 & 2.36 & 37.09 & 11321 & 0.03 & 7.89 \\
                                    &  1000 & 499500 & 17521 & 965123 & 16.76 & 242.81 & 23497 & 0.08 & 12.8 \\
\multirow{2}{*}{$[0, 1]^3$}         &  500 & 124750 & 10717 & 1134028 & 3.55 & 281.9 & 70975 & 0.17 & 20.65 \\
                                    &  1000 & 499500 & 43310 & 20634286 & 46.91 & 5200.04 & 159887 & 0.36 & 40.43 \\
\multirow{2}{*}{$S^1 \times S^1$}   &  500 & 124750 & 16665 & 2775487 & 5.45 & 689.69 & 73967 & 0.14 & 21.09 \\
                                    &  1000 & 499500 & 65558 & 46094578 & 73.07 & 11486.7 & 171105 & 0.36 & 43.69 \\
\hline
\end{tabular}}
\caption{Running time (in seconds) and memory consumption (in MB) of minimal presentation computation in homology degree 1.  The symbol $\infty$ means the pipeline ran out of memory and the computation could not finish.}\label{table:rips_compare}
\end{table}

\subparagraph{Further experiments.}
We summarize more experimental findings. The details are in Appendix~\ref{app:full_experiments}.

\textbf{Subroutines:} We measured the performance of the major substeps of our bifiltration algorithm.
It turns out that only around $1/6$ of the runtime is required to construct the incremental Delaunay complex,
around $1/3$ for computing the minimum enclosing ball radii, and $1/2$ for creating the chain complex structure
that we output.

\textbf{Higher dimensions:} Our software is not limited to points clouds in $2$ and $3$ dimensions. We ran some experiments for higher dimensional data. Not surprisingly, complex size,
time, and memory increase significantly when the ambient dimension increases. However, the ratio between the Delaunay triangulation and our incremental construction remains quite small ($\leq 10$) in the instances we tested.

\textbf{Compression:} Given the chain complex of a sublevel Delaunay bifiltration, the software \emph{multi-chunk}\footnote{Available at \url{https://bitbucket.org/mkerber/multi_chunk/src/master/}}~\cite{fk-chunk,fkr-compression}, computes an algebraically equivalent chain complex whose size is substantially smaller---by up to a factor of 30, in our experiments.

\section{Conclusion}
\label{sec:conclusion}
We have introduced the Delaunay-\v Cech bifiltration $\DelCech_{\filt}(\gamma)$ of a function $\gamma$ on a Euclidean point cloud.  We have seen that
\begin{enumerate}
\item $\DelCech_{\filt}(\gamma)$ is topologically well behaved, in the sense that it is equivalent to the offset bifiltration,
\item $\DelCech_{\filt}(\gamma)$ is reasonably small, both in theory and in practice, and
\item for low-dimensional data, $\DelCech_{\filt}(\gamma)$ is far more computable than any known alternative.
\end{enumerate}
 Our construction enables bipersistence analysis of tens of thousands of points in low dimensions.  Since such data is common in applications and is already actively studied using 2-parameter persistence  \cite{keller2018persistent,vipond2021multiparameter,benjamin2022multiscale}, we foresee our algorithm and its implementation having an immediate practical impact.  In fact, for Lipschitz functions $\gamma$, it should be possible to handle far bigger data sets by subsampling, at the cost of some small approximation error in the interleaving distance \cite{lesnick2015theory}.%

We have also introduced a second, topologically equivalent bifiltration, the sublevel Delaunay bifiltration $\Del_{\filt}(\gamma)$, a sub-bifiltration of $\DelCech_{\filt}(\gamma)$ differing only in the radii at which simplices appear.  It is natural to ask which radii are faster to compute.  For $\DelCech_{\filt}(\gamma)$, the birth radii are obtained via minimum enclosing ball computations, for which high-quality code is available. 
It is also possible to compute the birth radii for $\Del_{\filt}(\gamma)$ in constant time per simplex, by extending the usual approach to computing birth radii for the 1-parameter Delaunay filtration \cite[Chapter 6.1.2]{boissonnat2018geometric}; the details will be given in a future version of this paper.
As we have not implemented this, it is unclear which approach would be faster in practice.  It may also be that we can speed up our computations of $\DelCech_{\filt}(\gamma)$ by using a specialized algorithm to compute minimum enclosing balls.  %

We believe that Delaunay triangulations could be useful for multiparameter persistence in ways that extend well beyond our work.
A natural first question is how to extend our approach to the case of multiple functions $\gamma_1,\ldots,\gamma_m:X\to \R$.
A solution could be useful in the study of dynamic point processes, which can be modeled using a
pair of functions tracking the entrance and removal time of points.
A similar problem is to identify a Delaunay variant of the \emph{degree-Rips bifiltration}, an actively studied density-sensitive bifiltration whose definition requires no choice of bandwidth parameter~\cite{lesnick2015interactive,blumberg2022stability,rolle-degree}.

\subparagraph{Acknowledgements.} We thank the anonymous reviewers for their
valuable comments and suggestions. ÁJA and MK acknowledge the support of the
Austrian Science Fund (FWF) grant P 33765-N.
ML was supported by a grant from the Simons Foundation (Award ID 963845).
MK and ML acknowledge the Dagstuhl Seminar 23192 “Topological Data
Analysis and Applications” that initiated this collaboration.

\begin{appendix}

\section{Proof of \texorpdfstring{\cref{thm:collapsing}}{Theorem \ref{thm:collapsing}}}
\label{sec:proof_of_collapsing_theorem}

We briefly recall the concepts from discrete Morse theory~\cite{formanMorseTheoryCell1998,chariDiscreteMorseFunctions2000} needed for our argument:
 A \deff{
 (discrete) vector field} $V$ on an abstract simplicial complex $K$ is a partition of a subset of $K$ into pairs
$(\sigma,\tau)$ with $\sigma$ a facet of $\tau$. Consider the Hasse
diagram of $K$ as an acyclic directed graph $G$ with edges pointing from
smaller to larger faces.  Formally, each element of $V$ is an edge in $G$.  If the graph obtained from $G$ by reversing the direction of all edges in $V$ is acyclic, then we call
$V$ a \deff{gradient}.  %

 We are interested in gradients because
of the following collapsing theorem~\cite{formanMorseTheoryCell1998}:
\begin{theorem}\label{thm:morse_collapse}
For simplicial complexes $K'\subset K$, if there is a
  gradient $V$ on $K$ such that $K\setminus K'$ is the union of the pairs of $V$, then
  $K\collapse K'$.
\end{theorem}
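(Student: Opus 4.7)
The plan is induction on the number of pairs in $V$, using the Hasse-diagram characterization of gradients to locate a collapsible pair at each stage. The base case $V = \varnothing$ forces $K \setminus K' = \varnothing$, so $K = K'$ and the conclusion is trivial. For the inductive step, it suffices to exhibit a single pair $(\sigma,\tau) \in V$ with $\sigma$ a free face of $\tau$ in $K$: the elementary collapse $K \collapse K \setminus \{\sigma,\tau\}$ is then legal, and $V' := V \setminus \{(\sigma,\tau)\}$ remains a gradient on $K \setminus \{\sigma,\tau\}$ because the modified Hasse diagram of this subcomplex is an induced subgraph of $G_V$ and acyclicity descends to induced subgraphs. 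Applying the inductive hypothesis to $K' \subset K \setminus \{\sigma,\tau\}$ and composing with the initial elementary collapse then finishes.

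To locate the free pair, consider the induced subgraph $G_V|_{K\setminus K'}$: it is finite and acyclic, hence admits a topological order. Pick a maximal simplex $\rho$ in that order, and let $(\sigma,\tau) \in V$ be the unique pair containing $\rho$, with $\sigma$ the facet. The first observation is that $\rho$ must equal $\sigma$ rather than $\tau$: the reversed $V$-edge $\tau \to \sigma$ lives entirely in $K \setminus K'$, so $\sigma$ succeeds $\tau$ in the order and $\tau$ cannot be maximal.

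Next I claim $\sigma$ is a free face of $\tau$. For any immediate coface $\tau' \neq \tau$ of $\sigma$ (so $\dim \tau' = \dim \sigma + 1$), the partition property forces $(\sigma,\tau') \notin V$, so the Hasse edge $\sigma \to \tau'$ survives in $G_V$. If $\tau' \in K \setminus K'$, then $\tau'$ succeeds $\sigma$ in the order, contradicting maximality of $\sigma$; if $\tau' \in K'$, then face-closure of $K'$ gives $\sigma \in K'$, contradicting $\sigma \in K \setminus K'$. So $\tau$ is the unique immediate coface of $\sigma$ in $K$. Any higher-dimensional coface $\tau'' \supsetneq \tau$ would contain a vertex $v \in \tau'' \setminus \tau$, making $\sigma \cup \{v\}$ an immediate coface of $\sigma$ distinct from $\tau$, contradicting what was just shown. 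Hence $\tau$ is the only proper coface of $\sigma$ in $K$, as required.

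The main subtlety is the forced conclusion $\rho = \sigma$: it is precisely the asymmetric convention of reversing only $V$-edges that guarantees, for any pair containing a maximal element, that the \emph{facet} is the maximal one, which is exactly what makes the pair collapsible. Once this is set up, the rest is the routine bookkeeping of the Forman--Chari framework: verifying that restricting $V$ preserves both the pairing structure and acyclicity, and iterating.
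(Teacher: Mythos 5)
Your proof is correct. The paper does not actually prove this statement---it is quoted from Forman's work with a citation---so there is no in-paper argument to compare against; what you give is the standard proof: induct on the number of pairs, take a sink $\rho$ of the modified Hasse diagram restricted to $K\setminus K'$, note that the reversed arrow $\tau\to\sigma$ forces $\rho$ to be the facet $\sigma$ of its pair, that any other cofacet of $\sigma$ would either give $\sigma$ an outgoing arrow inside $K\setminus K'$ or force $\sigma\in K'$, and that deleting a pair preserves the gradient property because the modified Hasse diagram of a subcomplex is an induced subgraph. The one step stated too narrowly is the last: to conclude that $\tau$ is the \emph{only} proper coface of $\sigma$, you treat only cofaces $\tau''\supsetneq\tau$, but you must also rule out a proper coface $\tau''$ of $\sigma$ with $\dim\tau''\geq\dim\sigma+2$ and $\tau\not\subset\tau''$; the same device works (any $v\in\tau''\setminus\sigma$ gives an immediate coface $\sigma\cup\{v\}$ of $\sigma$, which by uniqueness must equal $\tau$, forcing $\tau\subset\tau''$ after all). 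This is a one-line fix, not a gap in the idea.
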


We omit the easy proofs of the following two lemmas:
\begin{lemma}\label{lem:subDGVF}
For simplicial complexes $K'\subset K$, let $V$ be a gradient on $K$, and let $V'$ be a vector field on $K'$ such that each pair in $V'$ is also in $V$.  Then $V'$ is a gradient on $K'$.
\end{lemma}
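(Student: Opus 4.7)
The plan is to verify the gradient condition directly by contradiction, leveraging the fact that the modified Hasse diagram of $K'$ sits inside the modified Hasse diagram of $K$ as an induced subgraph. Since acyclicity of a directed graph passes to induced subgraphs, this will suffice.

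First, I would unpack the definitions carefully. Write $G$ and $G'$ for the Hasse diagrams of $K$ and $K'$ respectively, each oriented from smaller to larger faces, and let $G_V$ and $G'_{V'}$ denote the modified graphs obtained by reversing the edges corresponding to pairs in $V$ and $V'$. Because $K'\subset K$, the vertex set of $G'$ is a subset of the vertex set of $G$, and every facet relation $\sigma \lessdot \tau$ with $\sigma,\tau\in K'$ is also a facet relation in $K$. Hence every directed edge of $G'$ is a directed edge of $G$. Since by hypothesis every pair of $V'$ is a pair of $V$, reversing an edge in $G'_{V'}$ corresponds to reversing the same edge in $G_V$. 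Therefore every directed edge of $G'_{V'}$ is also a directed edge of $G_V$; in other words, $G'_{V'}$ is a subgraph of $G_V$.

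Next, I would note that $V'$ being a vector field on $K'$ is part of the hypothesis, so the only remaining thing to check is acyclicity of $G'_{V'}$. Suppose for contradiction that $G'_{V'}$ contains a directed cycle. By the subgraph containment established above, this same cycle is a directed cycle in $G_V$, contradicting the assumption that $V$ is a gradient on $K$. Hence $G'_{V'}$ is acyclic and $V'$ is a gradient on $K'$.

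There is essentially no obstacle here: the proof is purely a compatibility check between the combinatorial definitions of ``vector field'' and ``gradient'' under restriction to a subcomplex, and the authors' remark that this is ``easy'' is accurate. The only thing one should be mindful of is that the pair-preservation hypothesis $V'\subseteq V$ is essential; without it, $V'$ might reverse edges that $V$ does not, and the subgraph inclusion argument would fail.
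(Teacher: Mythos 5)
The paper omits the proof of this lemma, so there is no ``paper's approach'' to compare against; but your proof has a genuine gap in its central claim.

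The assertion that $G'_{V'}$ is a subgraph of $G_V$ is false in general. You correctly handle the reversed edges: a pair in $V'$ is a pair in $V$, so an edge reversed in $G'_{V'}$ is also reversed in $G_V$. But you never address the non-reversed edges. Take $\sigma \lessdot \tau$ with $\sigma,\tau\in K'$ and $(\sigma,\tau)\in V\setminus V'$ (the hypothesis $V'\subseteq V$ does not forbid this). Then $G'_{V'}$ contains the edge $\sigma\to\tau$, while $G_V$ contains $\tau\to\sigma$ instead, so $\sigma\to\tau$ is not an edge of $G_V$. A minimal counterexample: $K=K'=\{a,b,ab\}$, $V=\{(a,ab)\}$, $V'=\varnothing$; then $a\to ab$ is in $G'_{V'}$ but not in $G_V$. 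So the induced-subgraph shortcut, as you state it, does not close the argument.

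The lemma is nevertheless true, and the missing ingredient is the matching property of $V$. One clean fix: suppose $G'_{V'}$ has a directed cycle. Because $V'$ is a matching, no simplex can have two consecutive ``down'' (reversed) steps, and balancing the total dimension change around a cycle then forces the cycle to alternate strictly between dimensions $k$ and $k+1$; it has the form $\sigma_0\to\tau_0\to\sigma_1\to\tau_1\to\cdots\to\sigma_m=\sigma_0$ with $(\sigma_{i+1},\tau_i)\in V'$ (down edges) and $(\sigma_i,\tau_i)\notin V'$ (up edges), and $\sigma_i\neq\sigma_{i+1}$. Now each $(\sigma_{i+1},\tau_i)\in V'\subseteq V$, so $\tau_i\to\sigma_{i+1}$ is a reversed edge of $G_V$; and since $V$ is a matching and $\tau_i$ is already paired with $\sigma_{i+1}\neq\sigma_i$, we conclude $(\sigma_i,\tau_i)\notin V$, so $\sigma_i\to\tau_i$ is a non-reversed edge of $G_V$. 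Thus the cycle lies in $G_V$, contradicting acyclicity. Equivalently, one can argue that every $V'$-path in $K'$ is a $V$-path in $K$ and invoke the standard equivalence between the ``no closed $V$-paths'' and ``acyclic modified Hasse diagram'' characterizations of a gradient; but some form of the alternation-plus-matching reasoning is needed either way, and cannot be replaced by the bare subgraph inclusion you proposed.
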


\begin{lemma}[{\cite[Lemma
5.1]{bauerMorseTheoryCech2016}}]\label{lem:vertex_gradient}
  Let $V$ be a vector field on a simplicial complex $K$, and fix a vertex
  $x$ of $K$. If all pairs of $V$ are of the form $(\sigma,\sigma \cup \{x\})$,
  then $V$ is a gradient.
\end{lemma}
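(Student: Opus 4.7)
The plan is to verify that reversing the $V$-edges in the Hasse diagram $G$ of $K$ does not introduce any directed cycles. Write $\tilde G$ for the resulting directed graph; the task is to show $\tilde G$ is acyclic. Every edge of $\tilde G$ is either (i) an original upward Hasse edge $\tau \to \tau\cup\{y\}$ not belonging to $V$, or (ii) a reversed $V$-edge, which by hypothesis has the form $\sigma\cup\{x\} \to \sigma$ for some $\sigma$ with $x\notin\sigma$.

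First I would subdivide type (i) further according to whether the vertex added is $x$ or not, and introduce the simplicial complex $K' := \{\sigma \in K \mid x\notin\sigma\}$ together with the ``forget $x$'' map $\pi\colon K \to K'$ defined by $\pi(\sigma)=\sigma\setminus\{x\}$. Then the three edge types behave as follows under $\pi$: a reversed $V$-edge $\sigma\cup\{x\}\to\sigma$ becomes stationary at $\sigma\in K'$; an upward edge $\tau\to\tau\cup\{x\}$ that adds $x$ becomes stationary at $\tau\in K'$; and an upward edge $\tau\to\tau\cup\{y\}$ with $y\neq x$ becomes a genuine upward Hasse edge in $K'$ (note $x\in\tau$ iff $x\in\tau\cup\{y\}$, so $\pi$ sends both endpoints to the correct simplices differing by $y$). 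The key observation is that the only ``non-stationary'' moves in $\tilde G$ project to strictly dimension-increasing moves in the acyclic Hasse diagram of $K'$.

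Now suppose for contradiction that $\tilde G$ contains a directed cycle. Its image under $\pi$ is a closed walk in the Hasse diagram of $K'$ consisting of upward and stationary steps only; since the Hasse diagram of $K'$ is a DAG, such a closed walk must use no upward steps, so the original cycle uses no edges of the third type. The cycle therefore lies entirely in the subgraph spanned by edges of the form $\sigma\to\sigma\cup\{x\}$ or $\sigma\cup\{x\}\to\sigma$. But each unordered pair $\{\sigma,\sigma\cup\{x\}\}$ with $\sigma\cup\{x\}\in K$ contributes exactly one directed edge to $\tilde G$ (downward if the pair is in $V$, upward otherwise), and these pairs are pairwise disjoint as sets of simplices because the element containing $x$ uniquely determines its partner. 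Hence this subgraph is a disjoint union of single directed edges and contains no cycle, the desired contradiction. The only potentially delicate point is the projection argument — ensuring that $\pi$ really does collapse types (a) and (b) to stationary moves and converts type (c) to a Hasse edge in $K'$ — but this is a routine check from the definitions, so I expect no serious obstacle.
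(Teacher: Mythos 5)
Your argument is correct. One small remark first: the paper itself omits the proof of this lemma (it is quoted from Bauer--Edelsbrunner, whose argument runs through $V$-paths: in any $V$-path $\sigma_0,\sigma_0+x,\sigma_1,\dots$ the simplex $\sigma_1$ is a facet of $\sigma_0+x$ other than $\sigma_0$, hence contains $x$, hence cannot be the smaller member of a pair $(\sigma_1,\sigma_1+x)$; so no $V$-path has more than one step and in particular none is a nontrivial closed path). Your route is genuinely different and works directly on the modified Hasse diagram $\tilde G$: the quantity $\abs{\sigma\setminus\{x\}}$ is non-decreasing along every edge of $\tilde G$ and strictly increasing exactly on the upward edges that add a vertex $y\neq x$, so a directed cycle can only use edges joining some $\sigma$ to $\sigma\cup\{x\}$; since each such unordered pair carries exactly one directed edge and distinct pairs are disjoint, that subgraph is a matching and hence acyclic. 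This is a clean monovariant argument; what the $V$-path proof buys instead is brevity and the stronger structural fact that all $V$-paths have length one, which is sometimes useful elsewhere. The only cosmetic point in your write-up is the edge $\{x\}\to\{x,y\}$, whose image under $\pi$ has the empty set as tail; this is harmless if you phrase the projection step as the cardinality monovariant above (or admit the empty face), rather than literally as a walk in the Hasse diagram of $K'$.
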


To prove \cref{thm:collapsing}, we will use the language of \emph{selective Delaunay complexes} from~\cite{bauerMorseTheoryCech2016}. Given a subset $E$ of
a finite point set $Y\subset\bR^{d}$, the selective Delaunay complex
$\SelDel(Y, E)$
consists of all
simplices $\sigma \subset Y$ for which there exists a $(d-1)$-sphere $S$ such
that
\begin{enumerate}
\item all points of $\sigma$ lie on or inside $S$ and
\item no point of $E$ lies inside $S$.
\end{enumerate}
For $\sigma \in \SelDel(Y, E)$, let $\selradius{\sigma}{E}$ denote the minimum
radius of such $S$, and let \[\SelDel_{r}(Y, E)=\{\sigma\in\SelDel(Y, E) \mid \selradius{\sigma}{E}\leq r\}.\]

Note that %
$\SelDel_r(Y, \varnothing)=\mathcal C_r(Y)$, and that if $E\subset F\subset Y$, then $\SelDel_{r}(Y, F) \subset \SelDel_{r}(Y, E)$.
The sublevel Delaunay bifiltration $\Del_{\filt}(\gamma)$ can be described in terms of selective Delaunay complexes, as follows:
A subset $\sigma\subset X$ with $x_{k} \coloneqq \max\sigma$ is a simplex in $\Del_{r,s}(\gamma)$ if and only if
$\gamma(x_{k}) \leq s$ and $\sigma\in\Del_{r}(X_{k}, X_{k-1})$. In particular, if $\sigma\in \Inc(X)$, then $\rho_\sigma = \selradius{\sigma}{X_{k-1}}$.

As in~\cite{bauerMorseTheoryCech2016}, for $\sigma \subset X$ and
$x\in X$, we let $\sigma - x$ and $\sigma + x$ denote the sets
$\sigma\setminus\Set{x}$ and $\sigma\cup\Set{x}$, respectively. Note that one of
the sets $\sigma-x$, $\sigma+x$ is equal to $\sigma$.

\subparagraph{From sublevel \v{C}ech to sublevel Delaunay.} We first prove the following:
\begin{theorem}\label{thm:cech_to_sublevel_del}
  For any $a\in [0, \infty) \times \bR$,
  $\Cech_{a}(\gamma)\collapse \Del_{a}(\gamma)$.
\end{theorem}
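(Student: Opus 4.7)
The plan is to adapt the discrete-Morse collapse from the \v{C}ech complex to the Delaunay complex of Bauer and Edelsbrunner~\cite{bauerMorseTheoryCech2016} so that it becomes compatible with the refined excluded sets $X_{k-1}$ coming from the $\gamma$-order.  Fix $a = (r,s)$ and let $m$ be maximal with $\gamma(x_m) \le s$, so that $X_m = \gamma^{-1}(-\infty,s]$ and $\Cech_a(\gamma) = \Cech_r(X_m) = \SelDel_r(X_m,\varnothing)$.  Using the selective-Delaunay reformulation of $\Del_a(\gamma)$ noted just above the theorem, I would introduce interpolating simplicial complexes
\[
K_k \;=\; \{\sigma \in \Cech_a(\gamma) \mid \sigma \in \SelDel_r(X_m,\; X_k \cap X_{\max(\sigma)-1})\},\qquad 0 \le k \le m,
\]
where $X_{\max(\sigma)-1}$ denotes $X_{\ell-1}$ when $\max(\sigma) = x_\ell$.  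Antitonicity of $\SelDel_r(Y,-)$ in its second argument immediately gives $K_{k-1} \supset K_k$, shows that each $K_k$ is closed under faces, and unwinds to $K_0 = \Cech_a(\gamma)$ and $K_m = \Del_a(\gamma)$.  Since elementary collapses compose, it suffices to prove $K_{k-1} \collapse K_k$ for every $k \in \{1,\dots,m\}$.

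For the $k$-th step, I would build the vector field $V_k$ on $K_{k-1}$ consisting of all pairs $(\tau,\, \tau \cup \{x_k\})$ with $\tau \in K_{k-1}\setminus K_k$ and $x_k \notin \tau$.  Since every pair of $V_k$ involves the single vertex $x_k$, \cref{lem:vertex_gradient} promotes $V_k$ to a gradient, and \cref{thm:morse_collapse} then yields $K_{k-1} \collapse K_k$ as soon as the pairs of $V_k$ actually partition $K_{k-1} \setminus K_k$.  The heart of the argument is therefore two complementary closure properties: first, if $\sigma \in K_{k-1}\setminus K_k$ and $x_k \notin \sigma$, then $\sigma \cup \{x_k\} \in K_{k-1}\setminus K_k$; second, if $\sigma \in K_{k-1}\setminus K_k$ and $x_k \in \sigma$, then $\sigma \setminus \{x_k\} \in K_{k-1}\setminus K_k$.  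Together these guarantee that each $\sigma \in K_{k-1}\setminus K_k$ belongs to exactly one pair of $V_k$.

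Both closure properties reduce to short bookkeeping on witness spheres.  The pivotal observation is that any $\sigma \in K_{k-1}\setminus K_k$ has $\max(\sigma) > x_k$ (otherwise the $K_{k-1}$- and $K_k$-conditions on $\sigma$ coincide), and that every sphere $S$ witnessing $\sigma \in \SelDel_r(X_m, X_{k-1})$ must contain $x_k$ in its strict interior; otherwise the same $S$ would also witness $\sigma \in \SelDel_r(X_m, X_k)$, contradicting $\sigma \notin K_k$.  Adjoining $x_k$ as a vertex of $\sigma$ is then harmless because $x_k$ already sits strictly inside $S$, which gives the $K_{k-1}$-membership of $\sigma \cup \{x_k\}$.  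Conversely, if $\sigma \cup \{x_k\}$ lay in $K_k$, the corresponding witness sphere would have to pass through $x_k$ and in particular witness $\sigma \in K_k$, a contradiction.  The second closure property is dual: when $x_k \in \sigma$, the conditions $\sigma \in \SelDel_r(X_m,X_k)$ and $\sigma\setminus\{x_k\} \in \SelDel_r(X_m,X_k)$ are equivalent, since any witness for the latter must pass through $x_k$ and therefore also witnesses the former.  Composing the $m$ individual collapses yields the desired $\Cech_a(\gamma) \collapse \Del_a(\gamma)$.

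I expect the main obstacle to be choosing the $\max(\sigma)$-dependent excluded set in the definition of $K_k$ so that the endpoint identifications and the pairing argument are simultaneously consistent; once that scaffolding is in place, the sphere manipulations themselves are mechanical, and nothing beyond the vertex-gradient lemma and the standard collapse theorem from the beginning of this appendix is required.
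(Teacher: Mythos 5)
Your interpolating complexes $K_k$ are, after unwinding the two cases $\max(\sigma) \le x_k$ and $\max(\sigma) > x_k$, exactly the paper's complexes $\mathcal U_k = \Del_r(X_j, X_k) \cup \Del_a(\gamma)$, and your use of \cref{lem:vertex_gradient} and \cref{thm:morse_collapse} matches the paper's collapse machinery, so the scaffolding is essentially the same. The observation that any $\sigma \in K_{k-1}\setminus K_k$ has $\max(\sigma) > x_k$ is correct and is also how the paper reduces to a single selective-Delaunay condition, and your argument for the first closure property ($\sigma \cup\{x_k\}$) is sound.

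However, the second closure property ($\sigma \setminus\{x_k\}$) is where your argument has a genuine gap. You assert that for $x_k \in \sigma$ the conditions ``$\sigma \in \SelDel_r(X_m,X_k)$'' and ``$\sigma\setminus\{x_k\} \in \SelDel_r(X_m,X_k)$'' are equivalent, on the grounds that any witness sphere for the smaller simplex must pass through $x_k$. This is false: a witness sphere for $\sigma\setminus\{x_k\}$ need not contain $x_k$ at all (it may leave $x_k$ strictly outside), and the two membership conditions are not equivalent in general. For instance in $\bR^1$ with $X = \{0,1,10\}$, $k=2$, $\sigma = \{1,10\}$, $r=1$, the singleton $\{10\}$ lies in $\SelDel_r(X,\{0,1\})$ (take a degenerate sphere at $10$), while $\{1,10\}$ does not (any enclosing interval has radius $\ge 4.5$). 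Consequently ``$\sigma\setminus\{x_k\}\in K_k \Rightarrow \sigma\in K_k$'' does not follow from witness-sphere bookkeeping alone. What is actually needed is the conditional statement: \emph{if} $\sigma \in \SelDel_r(X_m,X_{k-1})\setminus\SelDel_r(X_m,X_k)$ \emph{then} $\sigma\setminus\{x_k\}$ has the same property. This is precisely \cite[Lemmas 5.5, 5.7]{bauerMorseTheoryCech2016}, which the paper invokes (via its \cref{lem:first_pairing}) rather than re-deriving; the proof there requires a careful analysis of smallest inclusive--exclusive circumspheres and is not a one-line sphere manipulation. You should replace the purported equivalence with a citation to those lemmas (or reproduce their pencil-type argument), after which the rest of your proof goes through.
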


Writing $a=(r,s)$, let $j=\max\,\{i\mid \gamma(x_i)\leq s\}$.   %
For $i\in\{0,\ldots,j\}$, let $\mathcal U_i = \Del_{r}(X_j,X_i) \cup \Del_{a}(\gamma)$, where we adopt the convention that $X_0=\varnothing$.  Note that we have a filtration
\begin{equation*}
\Del_{a}(\gamma)= \mathcal U_j\,\subset\, \mathcal U_{j-1}\,\subset\,\cdots\,\subset\, \mathcal U_0= \Cech_{a}(\gamma).
\end{equation*}
To prove \cref{thm:cech_to_sublevel_del}, it suffices to show that $\mathcal{U}_{i-1} \collapse \mathcal{U}_i$ for each $i\in \{1,\ldots,j\}$.  We will use the next lemma, which follows immediately
from~\cite[Lemma 5.5]{bauerMorseTheoryCech2016} and~\cite[Lemma
5.7]{bauerMorseTheoryCech2016}.
\begin{lemma}\label{lem:first_pairing}
For $i\in \{1,\ldots, j\}$ and $\sigma \in \Del_{r}(X_j, X_{i-1})\setminus \Del_{r}(X_j, X_{i})$, we also  have $\sigma\pm x_{i}\in\Del_{r}(X_j, X_{i-1})\setminus \Del_{r}(X_j, X_{i})$.
\end{lemma}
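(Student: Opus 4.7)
The plan is to prove the claim directly from the definition of the selective Delaunay complex, by interpolating between witness spheres via the affine pencil of sphere equations. Let $\sigma'$ denote whichever of $\sigma - x_i$ and $\sigma + x_i$ differs from $\sigma$; the other of $\sigma \pm x_i$ equals $\sigma$ and the conclusion is automatic.

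First I would verify $\sigma' \in \Del_r(X_j, X_{i-1})$. Fix a witness sphere $S$ for $\sigma \in \Del_r(X_j, X_{i-1})$. Since $\sigma \notin \Del_r(X_j, X_i)$ and $S$ already has no point of $X_{i-1}$ strictly inside, the point $x_i$ must lie strictly inside $S$. Then $\sigma' \subseteq \sigma \cup \{x_i\}$ lies on or inside $S$, so $S$ also witnesses $\sigma' \in \Del_r(X_j, X_{i-1})$.

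The core step is to show $\sigma' \notin \Del_r(X_j, X_i)$. Assume for contradiction that $S'$ is a witness for $\sigma' \in \Del_r(X_j, X_i)$; in particular, $\mathrm{rad}(S') \leq r$ and $x_i$ is not strictly inside $S'$. Writing each sphere equation in the form $\|x\|^2 - 2c \cdot x + d = 0$, I would consider the pencil $(S_t)_{t \in [0,1]}$ obtained by linearly interpolating the coefficient pair $(c,d)$ between those of $S$ and $S'$. Two standard facts make this pencil the right object: the power $\|p\|^2 - 2c(t) \cdot p + d(t)$ of any fixed point $p$ with respect to $S_t$ is affine in $t$, and the squared radius $\|c(t)\|^2 - d(t)$ is a convex quadratic in $t$. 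Using affineness, for each $p \in \sigma'$ the power is nonpositive at both endpoints hence throughout $[0,1]$, and for each $p \in X_{i-1}$ the power is nonnegative at both endpoints hence throughout. Using convexity, the squared radius is bounded above by the maximum of its endpoint values, hence by $r^2$. The power of $x_i$ is strictly negative at $t = 0$ and nonnegative at $t = 1$, so the intermediate value theorem yields some $t^* \in (0,1]$ where $x_i$ lies on $S_{t^*}$. Assembling the invariants, $S_{t^*}$ witnesses $\sigma = \sigma' \cup \{x_i\} \in \Del_r(X_j, X_i)$, contradicting the hypothesis.

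The main obstacle I anticipated was controlling the radius bound together with both containment conditions along a continuous path of spheres: naive linear interpolation of centers and radii preserves ``$\sigma'$ on or inside'' by convexity of the norm but breaks ``$X_{i-1}$ not strictly inside.'' The pencil of sphere equations is the right object precisely because it converts every inside/outside condition into the sign of an affine function that is automatically pinned down by its endpoint values, while the squared radius is convex so its upper bound also propagates from the endpoints. One small point to handle is that some intermediate $S_t$ could in principle be ``imaginary'' ($\mathrm{rad}^2(t) < 0$), but $S_{t^*}$ is automatically a genuine sphere, since $x_i$ lying on it forces $\mathrm{rad}^2(t^*) = \|x_i - c(t^*)\|^2 \geq 0$.
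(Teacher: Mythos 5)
Your proof is correct, and the pencil-of-spheres mechanism is exactly the right tool. Note, however, that the paper itself does not prove this lemma directly: it states that the claim follows immediately from Lemmas 5.5 and 5.7 of Bauer and Edelsbrunner's work on Morse theory for \v{C}ech and Delaunay complexes, and cites them. What you have written is effectively a self-contained reconstruction of the argument that underlies those cited lemmas: parametrize spheres by the coefficients of the defining quadratic, exploit that the power of a fixed point is affine along the pencil while the squared radius is convex, pin down the on/inside and not-strictly-inside constraints at the two endpoints, and extract a witness sphere through $x_i$ by the intermediate value theorem. Your remark that $S_{t^*}$ is automatically a real sphere because $x_i$ lies on it is also correct and worth saying. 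So you take a genuinely more elementary route than the paper (direct proof versus citation), but the route coincides with the one the cited reference takes internally.

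One small imprecision to flag: the line asserting $\sigma = \sigma' \cup \{x_i\}$ is an identity only when $x_i \in \sigma$, in which case $\sigma' = \sigma - x_i$. When $x_i \notin \sigma$, you have $\sigma' = \sigma + x_i$, so $\sigma' \cup \{x_i\} = \sigma' \supsetneq \sigma$. Your conclusion still holds there, since $\sigma \subset \sigma'$ has nonpositive power along the whole pencil, but in that case no pencil is actually needed: the hypothetical witness $S'$ for $\sigma' \in \Del_r(X_j, X_i)$ already contains $\sigma \subset \sigma'$ on or inside and excludes $X_i$ from its interior, so it directly witnesses $\sigma \in \Del_r(X_j, X_i)$. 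The pencil argument is genuinely required only when $x_i \in \sigma$, and your write-up handles that case correctly.
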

This implies the analogous statement for the complexes $\mathcal U_i$:
\begin{lemma}\label{lem:pairing}
For $i\in \{1,\ldots, j\}$ and $\sigma\in\mathcal U_{i-1}\setminus \mathcal U_{i}$, we have
 $\sigma\pm x_{i}\in\mathcal U_{i-1}\setminus \mathcal U_{i}$.
\end{lemma}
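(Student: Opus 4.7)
The plan is to reduce the claim to \cref{lem:first_pairing} via a short argument showing that membership in $\Del_a(\gamma)$ cannot interfere with the $x_i$-pairing. First I would unpack the definitions: since $\Del_a(\gamma) \subset \mathcal{U}_i$ by construction, the hypothesis $\sigma \in \mathcal{U}_{i-1} \setminus \mathcal{U}_i$ forces $\sigma \notin \Del_a(\gamma)$, and therefore $\sigma \in \Del_r(X_j, X_{i-1}) \setminus \Del_r(X_j, X_i)$. \cref{lem:first_pairing} then yields $\sigma \pm x_i \in \Del_r(X_j, X_{i-1}) \setminus \Del_r(X_j, X_i) \subset \mathcal{U}_{i-1}$. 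Writing $\sigma'$ for the partner of $\sigma$ (the element of $\{\sigma - x_i, \sigma + x_i\}$ that differs from $\sigma$), we already have $\sigma' \in \mathcal{U}_{i-1}$ and $\sigma' \notin \Del_r(X_j, X_i)$, so the remaining task is to show $\sigma' \notin \Del_a(\gamma)$.

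Next I would pin down $\max \sigma$, showing that $\max \sigma > x_i$. Suppose instead that $\max \sigma \leq x_i$, and let $x_k \coloneqq \max \sigma$, so $k \leq i$. Any witness sphere $S$ for $\sigma \in \Del_r(X_j, X_{i-1})$ has radius at most $r$, contains $\sigma$ on or inside, and excludes $X_{i-1}$ from its strict interior. Since $\sigma \setminus \{x_i\} \subset X_{i-1}$, these points are forced onto $S$, while $x_i$ (if it belongs to $\sigma$) is on or inside $S$ by assumption. As $X_{k-1} \subset X_{i-1}$, the same $S$ witnesses $\sigma \in \Del_r(X_k, X_{k-1})$, and since $\gamma(x_k) \leq \gamma(x_i) \leq s$, this forces $\sigma \in \Del_a(\gamma)$, contradicting the first paragraph. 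Hence $\max \sigma = x_m$ for some $m > i$, and because $\sigma$ and $\sigma'$ differ only by the smaller element $x_i$, also $\max \sigma' = x_m$.

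Finally, suppose for contradiction that $\sigma' \in \Del_a(\gamma)$, so $\sigma' \in \Del_r(X_m, X_{m-1})$ via some witness sphere $S'$ of radius at most $r$. Because $m > i$ implies $X_i \subset X_{m-1}$, the sphere $S'$ excludes $X_i$ from its strict interior, and since $\sigma' \subset X_j$ lies on or inside $S'$, this same $S'$ witnesses $\sigma' \in \Del_r(X_j, X_i)$, contradicting $\sigma' \notin \Del_r(X_j, X_i)$. Therefore $\sigma' \notin \Del_a(\gamma)$, as required. I expect the main friction to lie in the $\max \sigma > x_i$ step, which relies on the observation that ``no $X_{i-1}$ strictly inside'' pins the points of $\sigma \cap X_{i-1}$ onto the sphere; once that is secured, the inclusion $X_i \subset X_{m-1}$ closes the argument almost for free.
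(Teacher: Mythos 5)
Your proof is correct and follows essentially the same route as the paper's: reduce to \cref{lem:first_pairing}, show $\max\sigma > x_i$ by noting that otherwise the witness sphere for $\sigma\in\Del_r(X_j,X_{i-1})$ already puts $\sigma$ in $\Del_a(\gamma)$, and then use $X_i\subset X_{(\max\sigma)-1}$ together with $\sigma\pm x_i\notin\Del_r(X_j,X_i)$ to rule out $\sigma\pm x_i\in\Del_a(\gamma)$. The remark about points of $\sigma\cap X_{i-1}$ lying on the sphere is true but not needed for the argument.
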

\begin{proof}
We need to show that if $\sigma\in \SelDel_{r}(X_j, X_{i-1}) \setminus \SelDel_{r}(X_j, X_{i})$ and
  $\sigma\not\in\Del_{a}(\gamma)$, then the same holds for $\sigma\pm x_{i}$.    \cref{lem:first_pairing} gives that
  $\sigma\pm x_{i}\in\SelDel_{r}(X_j, X_{i-1})\setminus\SelDel_{r}(X_j, X_{i})$, so it remains only to show that $\sigma\pm x_{i}\not\in\Del_{a}(\gamma)$.
 Let  $x_{k}\coloneqq \max\sigma$. We claim that $x_{k} > x_{i}$. Suppose to
  the contrary that $x_{k}\leq x_{i}$. Then $X_{k-1}\subset X_{i-1}$ and since
  $\sigma\in\Del_{r}(X_{k}, X_{i-1})$, it follows that
  $\sigma\in\Del_{r}(X_{k}, X_{k-1})$. But $\Del_{r}(X_{k}, X_{k-1})\subset \Del_{a}(\gamma)$, so $\sigma\in\Del_{a}(\gamma)$, a contradiction. We conclude that $x_{k} > x_{i}$ and
  $\max(\sigma\pm x_{i}) = x_{k}$.  Since
  $\sigma\pm x_{i}\not\in \Del_{r}(X_{j}, X_{i})$, we also have
  $\sigma\pm x_{i}\not\in\Del_{r}(X_{k}, X_{k-1})$, and consequently,
  $\sigma\pm x_{i}\not\in\Del_{a}(\gamma)$.
\end{proof}

\begin{proof}[Proof of \cref{thm:cech_to_sublevel_del}]
  For each $i \in \Set{1,\dots,j}$, \cref{lem:pairing} gives a partition of
  $\mathcal U_{i-1}\setminus \mathcal U_{i}$ into pairs of the form
  $(\sigma-x_{i},\sigma +x_{i})$. By~\cref{lem:vertex_gradient}, this is a
  gradient.~\cref{thm:morse_collapse} now implies that
  $\mathcal{U}_{i-1} \collapse \mathcal{U}_i$.
\end{proof}

\subparagraph{From sublevel Delaunay-\v{C}ech to sublevel Delaunay.} The
following finishes the proof of~\cref{thm:collapsing}:
\begin{theorem}\label{thm:collapse_sublevel_delcech_del}
  For any $a\in [0,\infty)\times\bR$,
  $\DelCech_{a}(\gamma)\collapse \Del_{a}(\gamma)$.
\end{theorem}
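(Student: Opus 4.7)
The plan is to mirror closely the proof of \cref{thm:cech_to_sublevel_del}.  Writing $a=(r,s)$ and $j = \max\{i \mid \gamma(x_i) \leq s\}$, I would define a descending filtration
\[
\DelCech_a(\gamma) = \mathcal V_0 \supset \mathcal V_1 \supset \cdots \supset \mathcal V_j = \Del_a(\gamma)
\]
by setting $\mathcal V_i := (\DelCech_a(\gamma) \cap \SelDel_r(X_j, X_i)) \cup \Del_a(\gamma)$.  Each $\mathcal V_i$ is a simplicial complex as an intersection and union of simplicial complexes.  The identification $\mathcal V_0 = \DelCech_a(\gamma)$ is routine, since $\DelCech_a(\gamma) \subset \Cech_r(X_j) = \SelDel_r(X_j, \varnothing)$, and $\mathcal V_j = \Del_a(\gamma)$ follows just as in the proof of \cref{thm:cech_to_sublevel_del} from the inclusion $\SelDel_r(X_j, X_j) \subset \Del_a(\gamma)$ together with the identity $\rho_\sigma = \selradius{\sigma}{X_{k-1}}$.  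It then suffices to show $\mathcal V_{i-1} \collapse \mathcal V_i$ for each $i \in \{1, \ldots, j\}$.

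For each such $i$, I would pair every $\sigma \in \mathcal V_{i-1} \setminus \mathcal V_i$ with $\sigma \pm x_i$, exactly as in the proof of \cref{thm:cech_to_sublevel_del}.  By \cref{lem:first_pairing}, $\sigma \pm x_i$ lies in $\SelDel_r(X_j, X_{i-1}) \setminus \SelDel_r(X_j, X_i)$, and a direct adaptation of \cref{lem:pairing} shows $\sigma \pm x_i \notin \Del_a(\gamma)$.  The extra condition required for membership in $\mathcal V_{i-1} \setminus \mathcal V_i$ (as opposed to the $\mathcal U_{i-1} \setminus \mathcal U_i$ of that proof) is that $\sigma \pm x_i$ lie in $\Inc(X)$, since this is the only difference between the definitions of $\DelCech_a(\gamma)$ and $\Cech_a(\gamma)$.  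When $x_i \in \sigma$, this holds automatically because $\sigma - x_i$ is a face of $\sigma$ and $\Inc(X)$ is closed under taking faces.

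The hard part will be verifying $\sigma + x_i \in \Inc(X)$ in the case $x_i \notin \sigma$.  Writing $x_k = \max\sigma$, this amounts to producing a circumsphere of $(\sigma - x_k) \cup \{x_i\}$ with $x_k$ on or inside and $X_{k-1}$ outside or on, equivalently (using the identity $\rho_\sigma = \selradius{\sigma}{X_{k-1}}$ and its converse) that $\sigma + x_i \in \SelDel(X_k, X_{k-1})$.  The natural approach is a pencil argument in the spirit of the proof of \cref{lem:purity_lemma}: starting from the sphere $S_\sigma$ witnessing $\sigma \in \Inc(X)$ (which has $x_i$ outside or on, since $x_i \in X_{k-1}$) and sweeping within the affine family of circumspheres of $\sigma - x_k$, continuity produces a sphere passing through $x_i$.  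The delicate point is showing that none of the remaining constraints---$x_k$ on or inside, and $X_{k-1} \setminus \{x_i\}$ outside or on---is violated along the deformation, which is where the minimality of the index $i$ and the general position assumption should be exploited (in particular, to rule out the genuinely obstructed case $\dim \sigma = d+1$, where $\sigma + x_i$ has too many vertices to admit any circumsphere).

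Once both cases of the pairing are checked, \cref{lem:vertex_gradient} applied to the single-vertex pivot $x_i$ shows that the collection of pairs is a gradient on $\mathcal V_{i-1}$ (alternatively one could invoke \cref{lem:subDGVF} to restrict the Thm~\ref{thm:cech_to_sublevel_del} gradient from $\mathcal U_{i-1}$ to $\mathcal V_{i-1}$), and \cref{thm:morse_collapse} then yields $\mathcal V_{i-1} \collapse \mathcal V_i$.  Concatenating these collapses produces $\DelCech_a(\gamma) \collapse \Del_a(\gamma)$, completing the proof.
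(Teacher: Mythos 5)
Your proposal is a genuinely different route from the paper's, and it has a gap that I do not see how to close along the lines you sketch. The paper does not mirror the proof of \cref{thm:cech_to_sublevel_del} at all: instead it uses an \emph{ascending} filtration $\Del_a(\gamma)=M_0\subset M_1\subset\cdots\subset M_j=\DelCech_a(\gamma)$ stratified by the \emph{maximum vertex} of a simplex, with $M_i=(X_{>i}\cap\Del_a(\gamma))\cup(X_{\leq i}\cap\DelCech_a(\gamma))$. The difference $M_i\setminus M_{i-1}$ consists exactly of the simplices in $\DelCech_a(\gamma)\setminus\Del_a(\gamma)$ with maximum vertex $x_i$, and this set is recognized as $X_{=i}\cap N_i$ with $N_i=\Cech_r(X_i)\cap(\SelDel(X_i,X_{i-1})\setminus\SelDel_r(X_i,X_{i-1}))$. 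The paper then invokes a \emph{different} piece of the Bauer--Edelsbrunner machinery, namely \cref{thm:selective_delcech_to_del} (built on their Lemma~5.6 and Theorem~5.9, not on \cref{lem:first_pairing}), which supplies a gradient $V_i$ on $\Cech_r(X_i)\cap\SelDel(X_i,X_{i-1})$ whose pairs $(\sigma,\sigma+x)$ all satisfy $x<x_i$; since such pairs preserve the condition $\max\sigma=x_i$, the restriction of $V_i$ to $X_{=i}\cap N_i$ is automatically a matching inside $M_i\setminus M_{i-1}$, and \cref{lem:subDGVF} plus \cref{thm:morse_collapse} finish. The key structural payoff of this decomposition is that membership of a pair partner in $\Inc(X)$ never has to be argued separately.

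Your descending filtration $\mathcal V_i$ is set up correctly (the identifications $\mathcal V_0=\DelCech_a(\gamma)$ and $\mathcal V_j=\Del_a(\gamma)$ do go through), but the matching step needs, for every $\sigma\in\mathcal V_{i-1}\setminus\mathcal V_i$ with $x_i\notin\sigma$, that $\sigma+x_i\in\Inc(X)$, and this is a genuine gap. When $\dim\sigma=d+1$, the set $\sigma+x_i$ has $d+3$ vertices and so cannot be a simplex of $\Inc(X)$, whose dimension is at most $d+1$; you label this the ``genuinely obstructed case'' and appeal to ``minimality of $i$ and general position,'' but supply no argument, and nothing in the hypotheses visibly forbids a $(d+1)$-simplex $\sigma\in\Inc(X)$ with $\max\sigma=x_k>x_i$, $x_i\notin\sigma$, $r_\sigma\le r<\rho_\sigma$, and $\sigma\in\SelDel_r(X_j,X_{i-1})\setminus\SelDel_r(X_j,X_i)$. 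Even for $\dim\sigma\le d$, the pencil argument is not closed: you must show that while sweeping circumspheres of $\sigma-x_k$ from the witness $S_\sigma$ toward one through $x_i$, no point of $X_{k-1}\setminus\{x_i\}$ enters the interior and $x_k$ stays on or inside. The analogous sweep in \cref{lem:purity_lemma} is terminated by a maximality argument that is unavailable here, and no substitute is offered. Note that your alternative of restricting the $\mathcal U_{i-1}$ gradient via \cref{lem:subDGVF} does not avoid the issue: \cref{thm:morse_collapse} still requires the restricted pairs to exhaust $\mathcal V_{i-1}\setminus\mathcal V_i$, which again needs $\sigma+x_i\in\Inc(X)$.
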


For $i\in\Set{1, \dots, \abs{X}}$, let
\begin{align*}
X_{\leq i} &= \Set{\sigma\subset X \given \max \sigma \leq x_{i}},\\
X_{=i} &= \Set{\sigma\subset X \given \max \sigma = x_{i}},\\
X_{>i} &= \Set{\sigma\subset X \given \max \sigma > x_{i}}.
\end{align*}
Let $X_{\leq 0} = X_{=0} = \varnothing$ and
$X_{>0} = X$.  For $i\in \Set{0, \dots, \abs{X}}$, define the simplicial complex
\begin{equation*}
  M_{i}= (X_{>i} \cap \Del_{a}(\gamma)) \cup (X_{\leq i} \cap
\DelCech_{a}(\gamma)),
\end{equation*}
which is indeed simplicial because, for
$\sigma\subset\sigma'\in\Del_{a}(\gamma)$, we have
$r_\sigma \leq\rho_\sigma\leq\rho_{\sigma'}$. As before, write $a=(r,s)$ and let
$j=\max\,\{i\mid \gamma(x_i)\leq s\}$. Note that we have a filtration
\begin{equation*}
  \Del_{a}(\gamma) = M_{0} \,\subset\, M_{1} \,\subset\, \dots \,\subset\, M_{j-1} \,\subset\, M_{j} = \DelCech_{a}(\gamma).
\end{equation*}
As above, to prove~\cref{thm:collapse_sublevel_delcech_del},
it suffices to show that  $M_{i} \collapse M_{i-1}$ for each $i\in\Set{1,\dots,j}$.  We use the following result from~\cite{bauerMorseTheoryCech2016} about selective Delaunay
complexes.

\begin{proposition}\label{thm:selective_delcech_to_del}
  For any $i\in\Set{1,\dots, \abs{X}}$, there is a gradient $V_i$ on
  $\Cech_{r}(X_{i})\cap\SelDel(X_{i}, X_{i-1})$ that induces a
  collapse to $\SelDel_{r}(X_{i}, X_{i-1})$, and whose pairs $(\sigma,\sigma+x)$
  satisfy $x< x_{i}$.
\end{proposition}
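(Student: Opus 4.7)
The plan is to build the gradient $V_i$ as a concatenation of per-vertex gradient vector fields, following the strategy behind \cref{lem:first_pairing}. For $j\in\{0,\ldots,i-1\}$, introduce the intermediate complex
\[
L_j := \Cech_{r}(X_i) \cap \SelDel(X_i, X_{i-1}) \cap \SelDel_r(X_i, X_j),
\]
with the convention $\SelDel_r(X_i,X_0)=\Cech_r(X_i)$. Then $L_0 = \Cech_{r}(X_i)\cap\SelDel(X_i,X_{i-1})$ is the domain of the claimed gradient, $L_{i-1}=\SelDel_r(X_i,X_{i-1})$ is its target, and we have a filtration $L_0\supset L_1\supset\cdots\supset L_{i-1}$. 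My plan is to build a gradient $V_i^j$ on $L_{j-1}$ collapsing $L_{j-1}$ onto $L_j$ via pairs of the form $(\tau,\tau+x_j)$ with the fixed vertex $x_j$, and then set $V_i := \bigsqcup_j V_i^j$.

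For the pairing in $V_i^j$, I pair each $\sigma\in L_{j-1}\setminus L_j$ with $\sigma+x_j$ if $x_j\notin\sigma$ and with $\sigma-x_j$ if $x_j\in\sigma$. The geometric content is to verify that the partner also lies in $L_{j-1}\setminus L_j$. Since $\sigma\notin L_j$, every radius-$\leq r$ sphere witnessing $\sigma\in\SelDel_r(X_i,X_{j-1})$ must have $x_j$ strictly in its interior, for otherwise $\sigma$ would belong to $L_j$. The same sphere then witnesses $\sigma\pm x_j\in\SelDel_r(X_i,X_{j-1})\setminus\SelDel_r(X_i,X_j)$ because $x_j$ being strictly inside means $\sigma\pm x_j$ is on or inside that sphere. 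Preserving $\SelDel(X_i,X_{i-1})$-membership of the partner requires a separate step: starting from a $\SelDel(X_i,X_{i-1})$-witness sphere $S'$ for $\sigma$ (in which $x_j$ lies on or outside), I interpolate through a pencil of enclosing spheres of $\sigma$ until reaching a sphere with $x_j$ on the boundary and no other $X_{i-1}$ point strictly inside; a general-position argument in the spirit of \cref{lem:purity_lemma} certifies such a pencil.

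Since every pair in $V_i^j$ shares the fixed vertex $x_j$, \cref{lem:vertex_gradient} applies to yield that $V_i^j$ is a gradient on $L_{j-1}$, and \cref{thm:morse_collapse} then gives $L_{j-1}\collapse L_j$. Composing these collapses produces $L_0\collapse L_{i-1}$ as desired. The union $V_i := \bigsqcup_j V_i^j$ consists entirely of pairs $(\sigma,\sigma+x_j)$ with $j<i$, satisfying the required property $x_j<x_i$. Acyclicity of the combined $V_i$ follows because the supports $L_{j-1}\setminus L_j$ are pairwise disjoint and stratify $L_0\setminus L_{i-1}$ in alignment with the decreasing filtration, preventing any cycle in the reversed Hasse diagram from spanning multiple layers.

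The main obstacle is verifying the sub-step that $\sigma\pm x_j$ still belongs to $\SelDel(X_i,X_{i-1})$. This requires producing a single sphere that simultaneously encloses $\sigma\pm x_j$, passes through $x_j$ on its boundary, and excludes every other point of $X_{i-1}$ from its strict interior. The pencil interpolation must be steered so that no other $X_{i-1}$ point crosses into the interior before $x_j$ first reaches the boundary; this is precisely where the hypothesis $\sigma\in\SelDel(X_i,X_{i-1})$ is indispensable, in conjunction with the general-position assumption. Making this argument fully rigorous is the technical heart of the proof.
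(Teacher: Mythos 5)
Your strategy differs from the paper's, which does not reprove this statement but simply cites the gradient constructed by Bauer and Edelsbrunner (their Equation~30, Lemma~5.6, and Theorem~5.9). Evaluated on its own terms, your construction has a fatal gap precisely at the step you flag as the ``technical heart'': for $\sigma\in L_{j-1}\setminus L_j$ with $x_j\notin\sigma$, the simplex $\sigma+x_j$ need not belong to $\SelDel(X_i,X_{i-1})$, hence need not lie in $L_0$ at all, so $V_i^j$ is not a well-defined matching on $L_{j-1}$. This is not a missing lemma that a more careful pencil argument would supply; the claim is false.

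Concretely, take $d=2$, $i=4$, $j=1$, $r=1.1$, with $x_1=(0,0.5)$, $x_2=(0,0.1)$, $x_3=(-1,0)$, $x_4=(1,0)$, and $\sigma=\{x_3,x_4\}$. Then $\meb{\sigma}=1\le r$; the circle through $x_3,x_4$ centered at $(0,-10)$ has $x_1$ and $x_2$ strictly outside, so $\sigma\in\SelDel(X_4,X_3)$; and the smallest circle enclosing $\sigma$ with $x_1$ not in its interior is the circumcircle of $x_1x_3x_4$, of radius $1.25>r$, so $\sigma\in L_0\setminus L_1$. Your rule pairs $\sigma$ with $\sigma+x_1=\{x_1,x_3,x_4\}$. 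But any disk witnessing $\{x_1,x_3,x_4\}\in\SelDel(X_4,X_3)$ must have $x_1,x_3$ on its boundary and $x_4$ on or inside; being convex, it contains the triangle $x_1x_3x_4$ and therefore contains $x_2$, an interior point of that triangle, in its open interior. Hence $\sigma+x_1\notin\SelDel(X_4,X_3)$, i.e.\ $\sigma+x_1\notin L_0$. The structural reason is that the vertex $x_j$ forced on you by the auxiliary filtration $(\SelDel_r(X_i,X_j))_j$ is determined by the spheres excluding only $X_{j-1}$, and has nothing to do with the sphere governing membership in $\SelDel(X_i,X_{i-1})$. The Bauer--Edelsbrunner gradient invoked by the paper avoids this by choosing the pairing vertex of $\sigma$ from the points on or inside the smallest witness sphere relative to the \emph{full} exclusion set $X_{i-1}$ (in the example above, $\sigma$ is paired with $\{x_2,x_3,x_4\}$ via the circumcircle of $x_2x_3x_4$), in a single global matching rather than a composition of per-vertex collapses; the per-vertex decomposition you propose is the mechanism of \cref{thm:cech_to_sublevel_del}, not of this proposition.
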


\begin{proof}
The vector field $V_i$ is defined by~\cite[Equation 30 and Lemma 5.6]{bauerMorseTheoryCech2016}; it is shown in the proof of~\cite[Theorem 5.9]{bauerMorseTheoryCech2016} that $V_i$ is indeed a gradient.
\end{proof}

\begin{proof}[Proof of \cref{thm:collapse_sublevel_delcech_del}]
Note that \[M_{i}\setminus M_{i-1}=X_{=i}\cap(\DelCech_{a}(\gamma)\setminus \Del_{a}(\gamma)).\]  Thus, for
 a simplex $\sigma\in\DelCech_{a}(\gamma)$, we have  $\sigma\in M_{i}\setminus M_{i-1}$ if and only if
  $\max \sigma = x_{i}$ and $r_\sigma \leq r < \rho_\sigma$.  It follows that $M_{i}\setminus M_{i-1}=X_{=i}\cap N_{i}$, where
  \[N_{i} \coloneqq \Cech_{r}(X_{i})\cap(\SelDel(X_{i}, X_{i-1})\setminus \SelDel_{r}(X_{i}, X_{i-1})).\]

  $V_i$ partitions $N_{i}$ into pairs of the form $(\sigma,\sigma+x)$ with
  $x< x_{i}$, so if $\sigma\in X_{=i}$, then $\sigma+ x\in X_{=i}$. Thus,
  $V_{i}$ restricts to a partition of $X_{=i}\cap N_{i}$, which is a
  gradient on $M_{i}$ by \cref{lem:subDGVF}.   \cref{thm:morse_collapse}~now yields
  the desired collapse.
\end{proof}

\section{Further experimental evaluation}
\label{app:full_experiments}
\subparagraph{Full evaluation of size and performance.}
We enlist the comprehensive results from our experimental runs in Tables~\ref{table:size_full} and~\ref{table:time_overview_full}.

\begin{table}[h]
  \resizebox{\textwidth}{!}{
  \begin{tabular}{ccc||cc| cc | cc | cc }
    \hline
    \multirow{2}{*}{Sample} & \multirow{2}{*}{\#Points} & \multirow{1}{*}{Delaunay}& \multicolumn{2}{c}{Random} & \multicolumn{2}{c}{Density} & \multicolumn{2}{c}{Height} & \multicolumn{2}{c}{Eccentricity} \\
    & & size & Size & Ratio & Size & Ratio & Size & Ratio & Size & Ratio \\
    \hline
\multirow{6}{*}{$[0, 1]^2$}         & 1000 & 5971 & 19223 & 3.22 & 23497 & 3.94 & 28167 & 4.72 & 34807 & 5.83 \\
                                    & 2000 & 11971 & 39801 & 3.32 & 48605 & 4.06 & 62815 & 5.25 & 74391 & 6.21 \\
                                    & 4000 & 23969 & 78859 & 3.29 & 99339 & 4.14 & 138341 & 5.77 & 155301 & 6.48 \\
                                    & 8000 & 47963 & 158443 & 3.3 & 201929 & 4.21 & 304331 & 6.35 & 326321 & 6.8 \\
                                    & 16000 & 95951 & 319753 & 3.33 & 410575 & 4.28 & 660215 & 6.88 & 682773 & 7.12 \\
                                    & 32000 & 191959 & 639235 & 3.33 & 838631 & 4.37 & 1430915 & 7.45 & 1408505 & 7.34 \\

\hline
\multirow{6}{*}{$[0, 1]^3$}         & 1000 & 27609 & 122177 & 4.43 & 159887 & 5.79 & 198433 & 7.19 & 244257 & 8.85 \\
                                    & 2000 & 56299 & 255927 & 4.55 & 345469 & 6.14 & 458447 & 8.14 & 551599 & 9.8 \\
                                    & 4000 & 113975 & 529063 & 4.64 & 731093 & 6.41 & 1050155 & 9.21 & 1193795 & 10.47 \\
                                    & 8000 & 227809 & 1070993 & 4.7 & 1518117 & 6.66 & 2376689 & 10.43 & 2562581 & 11.25 \\
                                    & 16000 & 458947 & 2172081 & 4.73 & 3166425 & 6.9 & 5281167 & 11.51 & 5418085 & 11.81 \\
                                    & 32000 & 921399 & 4359419 & 4.73 & 6607583 & 7.17 & 11650367 & 12.64 & 11290949 & 12.25 \\

\hline
\multirow{6}{*}{$S^1$}              & 1000 & 5975 & 19481 & 3.26 & 23611 & 3.95 & 25575 & 4.28 & 39419 & 6.6 \\
                                    & 2000 & 11971 & 39131 & 3.27 & 49593 & 4.14 & 57019 & 4.76 & 83209 & 6.95 \\
                                    & 4000 & 23959 & 79839 & 3.33 & 102307 & 4.27 & 127111 & 5.31 & 180307 & 7.53 \\
                                    & 8000 & 47967 & 158789 & 3.31 & 209099 & 4.36 & 282055 & 5.88 & 375891 & 7.84 \\
                                    & 16000 & 95953 & 318919 & 3.32 & 430991 & 4.49 & 619561 & 6.46 & 788465 & 8.22 \\
                                    & 32000 & 191943 & 638553 & 3.33 & 889599 & 4.63 & 1343567 & 7.0 & 1638579 & 8.54 \\

\hline
\multirow{6}{*}{$S^2$}              & 1000 & 26439 & 116635 & 4.41 & 150747 & 5.7 & 175995 & 6.66 & 263183 & 9.95 \\
                                    & 2000 & 54327 & 245651 & 4.52 & 325791 & 6.0 & 415555 & 7.65 & 600589 & 11.06 \\
                                    & 4000 & 109987 & 503831 & 4.58 & 709303 & 6.45 & 962039 & 8.75 & 1347043 & 12.25 \\
                                    & 8000 & 222339 & 1040769 & 4.68 & 1534939 & 6.9 & 2165395 & 9.74 & 2984415 & 13.42 \\
                                    & 16000 & 450133 & 2104495 & 4.68 & 3241029 & 7.2 & 4850005 & 10.77 & 6380329 & 14.17 \\
                                    & 32000 & 909567 & 4266411 & 4.69 & 6824453 & 7.5 & 10761403 & 11.83 & 13738989 & 15.1 \\

\hline
\multirow{6}{*}{$S^1 \times S^1$}   & 1000 & 27003 & 122181 & 4.52 & 171105 & 6.34 & 230075 & 8.52 & 215815 & 7.99 \\
                                    & 2000 & 52717 & 257129 & 4.88 & 361151 & 6.85 & 515437 & 9.78 & 467803 & 8.87 \\
                                    & 4000 & 105249 & 512521 & 4.87 & 747617 & 7.1 & 1159641 & 11.02 & 937369 & 8.91 \\
                                    & 8000 & 211885 & 1010347 & 4.77 & 1526509 & 7.2 & 2561545 & 12.09 & 1911169 & 9.02 \\
                                    & 16000 & 426531 & 2028761 & 4.76 & 3093869 & 7.25 & 5578187 & 13.08 & 3912707 & 9.17 \\
                                    & 32000 & 862789 & 4065635 & 4.71 & 6376129 & 7.39 & 12145149 & 14.08 & 8118607 & 9.41 \\
\hline

\end{tabular}}

\centering
  \caption{Size comparison of incremental Delaunay complexes for various datasets (full version of Table~\ref{table:size}).}
  \label{table:size_full}
\end{table}

\begin{table}[h]
  \resizebox{\textwidth}{!}{
  \begin{tabular}{cc||ccc|ccc|ccc|ccc}
    \hline
    \multirow{3}{*}{Sample} & \multirow{3}{*}{\#Points} & \multicolumn{3}{c}{Random} & \multicolumn{3}{c}{Density} & \multicolumn{3}{c}{Height} & \multicolumn{3}{c}{Eccentricity} \\
    & & Time &  Time/simplex  & Memory & Time & Time/simplex & Memory & Time & Time/simplex & Memory & Time & Time/simplex & Memory \\
    & & (s) &   (ms) & (GB) & (s) & (ms) & (MB) & (s) & (ms) & (MB) & (s) & (ms) & (MB) \\
\hline
\multirow{6}{*}{$[0, 1]^2$}         &  1000 & 0.03 & 1.74 & 8.57 & 0.04 & 1.9 & 9.38 & 0.04 & 1.52 & 10.72 & 0.05 & 1.39 & 12.3 \\
                                    &  2000 & 0.08 & 2.02 & 13.18 & 0.08 & 1.55 & 14.95 & 0.1 & 1.64 & 18.3 & 0.12 & 1.66 & 20.91 \\
                                    &  4000 & 0.13 & 1.7 & 21.73 & 0.16 & 1.63 & 25.94 & 0.21 & 1.53 & 35.36 & 0.25 & 1.64 & 38.74 \\
                                    &  8000 & 0.28 & 1.79 & 39.18 & 0.36 & 1.76 & 48.2 & 0.47 & 1.55 & 73.05 & 0.54 & 1.65 & 76.29 \\
                                    &  16000 & 0.67 & 2.09 & 74.78 & 0.83 & 2.03 & 93.82 & 1.13 & 1.7 & 155.48 & 1.21 & 1.77 & 156.41 \\
                                    &  32000 & 1.53 & 2.39 & 147.96 & 1.79 & 2.13 & 187.54 & 2.55 & 1.78 & 322.46 & 2.64 & 1.88 & 322.64 \\

\hline
\multirow{6}{*}{$[0, 1]^3$}         &  1000 & 0.25 & 2.04 & 31.75 & 0.28 & 1.77 & 40.07 & 0.34 & 1.73 & 50.43 & 0.45 & 1.84 & 60.09 \\
                                    &  2000 & 0.52 & 2.05 & 62.64 & 0.73 & 2.12 & 81.86 & 0.89 & 1.95 & 110.41 & 1.11 & 2.02 & 128.71 \\
                                    &  4000 & 1.2 & 2.28 & 128.02 & 1.61 & 2.21 & 170.88 & 2.16 & 2.05 & 247.88 & 2.55 & 2.14 & 272.34 \\
                                    &  8000 & 2.63 & 2.46 & 246.07 & 3.49 & 2.3 & 342.96 & 5.1 & 2.14 & 554.64 & 5.64 & 2.2 & 586.6 \\
                                    &  16000 & 5.8 & 2.67 & 517.54 & 7.82 & 2.47 & 727.03 & 11.74 & 2.22 & 1166.0 & 12.49 & 2.31 & 1196.14 \\
                                    &  32000 & 12.36 & 2.84 & 1005.29 & 16.79 & 2.54 & 1505.86 & 27.07 & 2.32 & 2688.1 & 27.07 & 2.4 & 2576.14 \\

\hline
\multirow{6}{*}{$S^1$}              &  1000 & 0.03 & 1.58 & 8.44 & 0.05 & 2.01 & 9.4 & 0.04 & 1.5 & 9.72 & 0.05 & 1.35 & 13.12 \\
                                    &  2000 & 0.06 & 1.62 & 12.87 & 0.08 & 1.54 & 15.08 & 0.08 & 1.33 & 17.14 & 0.14 & 1.67 & 22.79 \\
                                    &  4000 & 0.13 & 1.69 & 21.88 & 0.16 & 1.58 & 26.48 & 0.17 & 1.36 & 32.86 & 0.3 & 1.66 & 44.19 \\
                                    &  8000 & 0.28 & 1.78 & 39.56 & 0.35 & 1.68 & 50.13 & 0.44 & 1.55 & 68.38 & 0.64 & 1.71 & 87.75 \\
                                    &  16000 & 0.64 & 2.02 & 74.72 & 0.81 & 1.87 & 98.46 & 1.03 & 1.66 & 146.57 & 1.44 & 1.82 & 178.85 \\
                                    &  32000 & 1.5 & 2.35 & 147.81 & 1.79 & 2.01 & 199.02 & 2.37 & 1.76 & 311.03 & 3.17 & 1.94 & 371.64 \\

\hline
\multirow{6}{*}{$S^2$}              &  1000 & 0.21 & 1.76 & 30.85 & 0.26 & 1.74 & 37.81 & 0.33 & 1.86 & 44.09 & 0.47 & 1.8 & 63.72 \\
                                    &  2000 & 0.48 & 1.93 & 59.57 & 0.67 & 2.07 & 77.06 & 0.79 & 1.89 & 101.46 & 1.25 & 2.09 & 139.36 \\
                                    &  4000 & 1.11 & 2.2 & 118.37 & 1.54 & 2.16 & 163.22 & 1.94 & 2.02 & 225.22 & 3.0 & 2.23 & 309.19 \\
                                    &  8000 & 2.5 & 2.4 & 244.05 & 3.57 & 2.33 & 354.23 & 4.57 & 2.11 & 496.82 & 6.95 & 2.33 & 681.24 \\
                                    &  16000 & 5.49 & 2.61 & 487.78 & 8.02 & 2.48 & 741.83 & 10.56 & 2.18 & 1093.43 & 15.6 & 2.44 & 1501.82 \\
                                    &  32000 & 12.0 & 2.81 & 964.62 & 17.56 & 2.57 & 1580.05 & 24.68 & 2.29 & 2469.64 & 34.62 & 2.52 & 3160.35 \\

\hline
\multirow{6}{*}{$S^1 \times S^1$}   &  1000 & 0.21 & 1.73 & 31.89 & 0.3 & 1.77 & 43.03 & 0.41 & 1.79 & 57.21 & 0.38 & 1.76 & 53.09 \\
                                    &  2000 & 0.49 & 1.92 & 62.76 & 0.75 & 2.06 & 85.69 & 1.05 & 2.03 & 125.26 & 0.95 & 2.04 & 110.82 \\
                                    &  4000 & 1.11 & 2.17 & 122.72 & 1.59 & 2.12 & 170.57 & 2.44 & 2.1 & 265.98 & 1.98 & 2.11 & 224.83 \\
                                    &  8000 & 2.36 & 2.34 & 236.03 & 3.49 & 2.29 & 354.77 & 5.54 & 2.16 & 619.1 & 4.17 & 2.18 & 437.08 \\
                                    &  16000 & 5.11 & 2.52 & 455.98 & 7.41 & 2.4 & 710.66 & 12.7 & 2.28 & 1271.56 & 8.94 & 2.28 & 910.05 \\
                                    &  32000 & 11.1 & 2.73 & 929.58 & 15.65 & 2.45 & 1461.12 & 28.5 & 2.35 & 2796.81 & 19.08 & 2.35 & 1847.01 \\

 \hline
\end{tabular}}
\centering
  \caption{Running time (in seconds), the average number of simplices processed per second, and the memory consumption (in MB) for computing the sublevel Delaunay bifiltration
(full version of Table~\ref{table:time_overview}).}
  \label{table:time_overview_full}
\end{table}

\subparagraph{Detailed runtime analysis.}
We measured the main sub-steps of the computation. There are three steps with significant running time,
and we display their contribution to the runtime in Table~\ref{table:time_analysis}:
the time to construct the incremental Delaunay complex, the time to compute all meb radii, and the time to generate
the chain complex out of the simplicial complex. The last step is not trivial because computing the boundary maps
requires a switch from
representing simplices by their boundary vertices (as done in the Simplex tree implementation of \textsc{Gudhi})
to representing simplices by their faces in codimension 1. We observe in Table~\ref{table:time_analysis}
that this step typically requires around half of the running time. We speculate that a different data structure
for simplicial complexes, more suitable for our application, could improve the running time further.

We further see that the computation of minimum enclosing ball radii takes generally more time than constructing
the complex itself. We point out here that the tools to compute mebs are not optimized for our application scenario,
where the number of points is only slightly larger than the dimension. There might also be room for improvement
here using a different computation method.

\begin{table}[!h]
  \resizebox{0.7\textwidth}{!}{
  \begin{tabular}{ccc||cccc}
    \hline
    Sample & Function & \#Points & Overall & Complex & meb & chain complex \\
\hline
    \multirow{6}{*}{$[0, 1]^2$} &\multirow{6}{*}{density}       & 1000 & 0.04 & 23.01\% & 29.67\% & 37.57\% \\
                                                       &        & 2000 & 0.08 & 15.59\% & 35.24\% & 43.44\% \\
                                                       &        & 4000 & 0.16 & 16.44\% & 33.82\% & 44.58\% \\
                                                       &        & 8000 & 0.36 & 16.12\% & 33.29\% & 45.28\% \\
                                                       &        & 16000 & 0.83 & 18.54\% & 31.85\% & 43.61\% \\
                                                       &        & 32000 & 1.79 & 18.16\% & 31.34\% & 45.11\% \\

    \hline
\multirow{6}{*}{$S^2$} &\multirow{6}{*}{eccentricity}           & 1000 & 0.47 & 12.81\% & 35.02\% & 50.99\% \\
                       &                                        & 2000 & 1.25 & 13.57\% & 33.94\% & 49.05\% \\
                       &                                        & 4000 & 3.0 & 11.90\% & 33.47\% & 51.07\% \\
                       &                                        & 8000 & 6.95 & 11.01\% & 32.66\% & 52.78\% \\
                       &                                        & 16000 & 15.6 & 11.03\% & 31.50\% & 53.81\% \\
                       &                                        & 32000 & 34.62 & 11.09\% & 31.14\% & 54.12\% \\

    \hline
\multirow{6}{*}{$S^1 \times S^1$} & \multirow{6}{*}{random}     & 1000 & 0.21 & 13.89\% & 34.83\% & 49.76\% \\
                                  &                             & 2000 & 0.49 & 13.19\% & 34.06\% & 50.01\% \\
                                  &                             & 4000 & 1.11 & 12.17\% & 33.90\% & 49.84\% \\
                                  &                             & 8000 & 2.36 & 12.07\% & 33.67\% & 50.50\% \\
                                  &                             & 16000 & 5.11 & 12.20\% & 32.96\% & 50.91\% \\
                                  &                             & 32000 & 11.1 & 12.09\% & 33.22\% & 50.72\% \\
\end{tabular}}
\centering
  \caption{The time breakdown (in seconds) for computing the sublevel Delaunay bifiltration.}
  \label{table:time_analysis}
\end{table}

\subparagraph{Higher dimensions.}
Our code can handle point clouds in all dimensions. We display the effect of higher dimensionality in Table~\ref{table:dimension}.
We observe that the size increases significantly when increasing the dimension which is expected since the worst-case bound
has an exponential dependence on $d$. However, we observe that the size of the Delaunay triangulation grows at a similar speed,
hence the overhead imposed by taking the function $\gamma$ into account is still relatively small.

\begin{table}[!h]
    \centering
  \resizebox{0.8\textwidth}{!}{
  \begin{tabular}{ccc || ccc | cc }
    \hline
    Sample & Function & Dim & Size & Time & Memory & Delaunay size & Ratio\\
                                                                          \hline
    \multirow{5}{*}{$[0, 1]^d$, n=500} & \multirow{5}{*}{density}       & 2 & 10917 & 0.02 & 6.58 & 2977 & 3.67 \\
                                       &                                & 3 & 70335 & 0.13 & 20.48 & 13505 & 5.21 \\
                                       &                                & 4 & 463649 & 1.11 & 109.95 & 77747 & 5.96 \\
                                       &                                & 5 & 3134645 & 10.36 & 1575.41 & 503747 & 6.22 \\
                                       &                                & 6 & 20632741 & 78.45 & 10246.52 & 3429063 & 6.02 \\
                                                                          \hline
    \multirow{3}{*}{$S^3$,n=500} & \multirow{3}{*}{height}              & 4 & 503865 & 1.14 & 121.46 & 74429 & 6.77 \\
                                 &                                      & 5 & 2417037 & 7.44 & 1102.71 & 319589 & 7.56 \\
                                 &                                      & 6 & 16089733 & 113.62 & 7379.98 & 1699797 & 9.47 \\
                                                                          \hline
    \multirow{5}{*}{$S^1\times S^1$, n=500} & \multirow{5}{*}{random}   & 3 & 61847 & 0.13 & 18.5 & 13471 & 4.59 \\
                                            &                           & 4 & 275805 & 0.65 & 67.02 & 60121 & 4.59 \\
                                            &                           & 5 & 1602905 & 5.53 & 773.18 & 301171 & 5.32 \\
                                            &                           & 6 & 7980373 & 31.01 & 3971.36 & 1499809 & 5.32 \\
    \hline
\end{tabular}}
\centering
  \caption{The size of the sublevel Delaunay bifiltration, the running time (in seconds) and memory consumption (in MB). Furthermore, the size of the Delaunay triangulation
and the ratio of the sizes are displayed.}
  \label{table:dimension}
\end{table}

\subparagraph{Compression.}
While not exploding in size for low dimension, the sublevel Delaunay bifiltration is still large in size
which poses a challenge to subsequent analysis algorithms on such a bifiltration.
Recall that we encode the bifiltration as a chain complex, where the chain group in dimension $k$
has the $k$-simplices of the incremental Delaunay complex as basis. The \emph{multi-chunk} algorithm\footnote{Available at \url{https://bitbucket.org/mkerber/multi_chunk/src/master/}}~\cite{fk-chunk,fkr-compression}
replaces a chain complex with a smaller chain complex (whose basis elements no longer correspond to simplices)
that is homotopy-equivalent. That is, as long as only topological properties of the bifiltration
(such as homology) are subsequently studied, we can post-process with the multi-chunk algorithm
to compress the resulting chain complex.

Table~\ref{table:multi-chunk} shows the results of this post-processing step. We can see the multi-chunk
significantly reduces the data size. Depending on the example, the compression
rate was between 0.0297 and 0.593.
The running time of multi-chunk varies a lot from instance to instance; this has to do with the special topological configuration
of the sublevel sets and we omit a detailed explanation. Finally, we can observe that the ratio between the post-processed
chain complex and the Delaunay triangulation appears to be non-increasing in many cases. This brings the question
whether it is possible to compute the compressed version of the sublevel Delaunay bifiltration directly,
without computing the uncompressed version first. Such an algorithm could potentially scale linearly in practice,
just like the computation of the Delaunay computation.

\begin{table}[h]\centering
  \resizebox{0.8\textwidth}{!}{
  \begin{tabular}{ccc||cc|ccc|cc}
    \hline
    \multirow{2}{*}{Sample} & \multirow{2}{*}{Function} & \multirow{2}{*}{\#Points} & \multicolumn{2}{c}{Complex} & \multicolumn{3}{c}{Multi-chunk} & \multirow{2}{*}{Delaunay size} & \multirow{2}{*}{Ratio} \\
    & & & Size & Time & Size & Compression & Time & & \\
 \hline
\multirow{6}{*}{$[0, 1]^2$} & \multirow{6}{*}{random}       &  1000 & 19223 & 0.0 & 10979 & 0.57 & 0.0 & 5971 & 1.84 \\
     &                                                      &  2000 & 39801 & 0.01 & 22385 & 0.56 & 0.01 & 11971 & 1.87 \\
     &                                                      &  4000 & 78859 & 0.02 & 45381 & 0.58 & 0.01 & 23969 & 1.89 \\
     &                                                      &  8000 & 158443 & 0.05 & 92339 & 0.58 & 0.03 & 47963 & 1.92 \\
     &                                                      &  16000 & 319753 & 0.11 & 187871 & 0.59 & 0.08 & 95951 & 1.96 \\
     &                                                      &  32000 & 639235 & 0.28 & 376459 & 0.59 & 0.14 & 191959 & 1.96 \\

    \hline
\multirow{6}{*}{$S^2$} & \multirow{6}{*}{density}           &  1000 & 150747 & 0.04 & 14223 & 0.09 & 0.02 & 26439 & 0.54 \\
     &                                                      &  2000 & 325791 & 0.1 & 29503 & 0.09 & 0.06 & 54327 & 0.54 \\
     &                                                      &  4000 & 709303 & 0.18 & 62039 & 0.09 & 0.24 & 109987 & 0.56 \\
     &                                                      &  8000 & 1534939 & 0.46 & 127949 & 0.08 & 0.86 & 222339 & 0.58 \\
     &                                                      &  16000 & 3241029 & 0.92 & 268187 & 0.08 & 3.17 & 450133 & 0.6 \\
     &                                                      &  32000 & 6824453 & 2.06 & 557371 & 0.08 & 10.62 & 909567 & 0.61 \\

    \hline
\multirow{6}{*}{$S^1\times S^1$} & \multirow{6}{*}{height}  &  1000 & 230075 & 0.05 & 10793 & 0.05 & 0.03 & 27003 & 0.4 \\
     &                                                      &  2000 & 515437 & 0.12 & 21305 & 0.04 & 0.05 & 52717 & 0.4 \\
     &                                                      &  4000 & 1159641 & 0.29 & 40501 & 0.03 & 0.14 & 105249 & 0.38 \\
     &                                                      &  8000 & 2561545 & 0.65 & 81615 & 0.03 & 0.34 & 211885 & 0.39 \\
     &                                                      &  16000 & 5578187 & 1.41 & 167225 & 0.03 & 0.78 & 426531 & 0.39 \\
     &                                                      &  32000 & 12145149 & 3.15 & 357835 & 0.03 & 1.66 & 862789 & 0.41 \\
    \hline
\end{tabular}}
\centering
  \caption{The size of the sublevel Delaunay bifiltration, the running time (in seconds) and memory consumption (in MB). Furthermore, the size of the Delaunay triangulation
and the ratio of the sizes are displayed.}
  \label{table:multi-chunk}
\end{table}

\end{appendix}

\bibliographystyle{plainurl_fulljournal}
\bibliography{refs}

\end{document}